\newsavebox{\measure@tikzpicture}
  \def\tikz@width{#1}%
\definecolor{lightestgray}{gray}{0.9}
\crefname{procedure}{procedure}{procedures}
\Crefname{procedure}{Procedure}{Procedures}
\title{Stabilizing Consensus is Impossible in Lossy Iterated Immediate Snapshot Models}
\author{Stephan Felber $\dag$}{Vienna University of Technology, Austria}{stephan.felber@tuwien.ac.at}{0009-0003-6576-1468}{}
\author{Hugo {Rincon Galeana $\ddag$}}{Vienna University of Technology, Austria\\ Berlin University of Technology, Germany}{hugorincongaleana@gmail.com}{0000-0002-8152-1275}{}
\authorrunning{S. Felber and H. Rincon Galeana} 
\keywords{distributed systems, dynamic networks, dynamic graphs, message adversaries, stabilizing consensus, asynchronous message passing}
\begin{document}

\maketitle

\begin{abstract}
    A substantial portion of distributed computing research is dedicated to terminating problems like consensus and similar agreement problems. However, non-terminating problems have been intensively studied in the context of self-stabilizing distributed algorithms, where processes may start from arbitrary initial states and can tolerate arbitrary transient faults. In between lie stabilizing problems, where the processes start from a well-defined initial state, but do not need to decide irrevocably and are allowed to change their decision finitely often until a stable decision is eventually reached.

    Stabilizing consensus has been studied within the context of synchronous message adversaries. In particular, Charron-Bost and Moran showed that a necessary condition for stabilizing consensus is the existence of at least one process that reaches all others infinitely often (a perpetual broadcaster). However, it was left open whether this is also a sufficient condition for solving stabilizing consensus.

    In this paper, we introduce the novel Delayed Lossy-Link (\DLL) model, and the Lossy Iterated Immediate Snapshot Model (\LIIS), for which we show stabilizing consensus to be impossible. The \DLL model is introduced as a variant of the well-known Lossy-Link model, which admits silence periods of arbitrary but finite length. The \LIIS model is a variant of the Iterated Immediate Snapshot (\IIS), model which admits finite length periods of at most $f$ omission faults per layer. In particular, we show that stabilizing consensus is impossible even when $f=1$. 
    
    Our results show that even in a model with very strong connectivity, namely, the Iterated Immediate Snapshot (\IIS) model, a single omission fault per layer effectively disables stabilizing consensus. Furthermore, since the \DLL model always has a perpetual broadcaster, the mere existence of a perpetual broadcaster, even in a crash-free setting, is not sufficient for solving stabilizing consensus, negatively answering the open question posed by Charron-Bost and Moran.
    
\end{abstract}

\section{Introduction}
\label{sec:Intro}

Agreement tasks, and in particular consensus, have always been a focal point of distributed computing research, not only because of their practical applicability, but also because consensus tasks characterize very precisely the limits of a distributed computing model. Understanding such limitations not only facilitates the quick assessment of problem solvability in a given model, but also sheds light on the impact that certain properties, such as synchrony, impose on the system. For instance, the celebrated FLP impossibility result~\cite{FLP/ACM} reveals the devastating effect of both asynchrony and process crashes on deterministic consensus solvability. In a similar vein, it has been shown that process crashes impair general task solvability~\cite{HS94} even in asynchronous shared memory systems, and that byzantine faults~\cite{LSP82} and/or message loss~\cite{SW89,SWK09,CGP15,WPRSS23,winkler_characterization_2020} impair consensus solvability even in the context of synchronous message passing systems.

Although distributed systems research has focused extensively on \emph{terminating} tasks, there exist interesting applications and problems that are inherently \emph{non-terminating}. Apart from the specific class of self-stabilizing distributed algorithms \cite{Dij74,Dol00}, which can even tolerate massive transient faults, asymptotic consensus~\cite{BT89,BHOT05}, stabilizing consensus~\cite{AFJ06,CM19:DC,SS21:SSS}, and approximate consensus~\cite{CFN15:ICALP,DLPSW86} are examples of non-terminating distributed tasks: processes are not required to terminate after having computed some final value, but rather eventually converge to some stable configuration. Such tasks are not only of theoretical interest, but are also essential for implementing practical distributed problems such as clock synchronization~\cite{LL84:PODC, WS07:DC} and sensor fusion~\cite{BS92}.

In this paper, we focus on stabilizing consensus, which can be viewed as the non-terminating variant of consensus. As in the case of consensus, all processes start with their own input values and must \emph{eventually} agree on a common value. The fundamental difference to terminating consensus is that every process may start arbitrarily late, and does not need to decide on a value irrevocably and exactly once, but can change its decision value \emph{finitely often}. Nevertheless, all processes must eventually stabilize on the \emph{same} decision value.

It should be noted that stabilizing consensus is also loosely related to asymptotic consensus ~\cite{FNS21}, in the sense that, in neither problem, processes are required to terminate. A fundamental difference, however, is that asymptotic consensus allows processes to decide from a real-valued range, in contrast to the validity condition for stabilizing consensus, which only allows decisions from the pool of input values. Stabilizing consensus is a stronger problem, since any protocol that solves stabilizing consensus also solves asymptotic consensus. Indeed, we provide a simple novel model introduced below (the \DLL), where stabilizing consensus is impossible (see~\cref{thm:stabagreeimp}), yet asymptotic consensus can be solved (see~\cref{thm:asympdll}, corresponding to~\Cref{proc:asympDLL}).
In this light, asymptotic consensus is more closely related to approximate consensus~\cite{CFN15:ICALP}, while stabilizing consensus is more closely related to consensus. 

\subsection{Related work}
\label{subsec:relwork}
We study stabilizing consensus in the \emph{Synchronous Message Passing} setting where processes communicate through uni-directional links in a round by round fashion and an \emph{adversary} suppresses certain links in every round. More specifically, the communication in every round is modeled as a directed graph sequence, which is under the control of a \emph{message adversary}~\cite{AG13}. Charron-Bost and Moran~\cite{CM19:DC} provided a class of algorithms in this setting, called the \emph{MinMax Algorithms}, solving stabilizing consensus under any message adversary that generates graph sequences adhering to two constraints: (i) there exists a process (a perpetual broadcaster) that is able to reach all other process (possibly via multiple hops) infinitely often, and (ii) the broadcasting time (in terms of rounds) for doing so is bounded by an unknown constant. The authors showed that (i) is necessary for solving stabilizing consensus, but leave the question open whether or not (i) is also sufficient. 

In this paper, we extend the Lossy-Link model, introduced by Santoro and Widmayer~\cite{SW89}, where two processes are connected by a pair of links that may drop at most one (directional) message per round, to the novel \emph{Delayed Lossy-Link} (\DLL) model, where both messages may be lost but (some) messages are guaranteed to be transmitted infinitely often. Whereas stabilizing consensus can be solved in the Lossy-Link model, we show that stabilizing consensus is impossible in the \DLL model, thus negatively answering Charron-Bost and Moran's open question in \cref{thm:stabimpdiis}. Our proof vaguely resembles the bivalency proof used for showing the impossibility of terminating consensus in the Lossy-Link model~\cite{SW89}, in the sense that we construct a forever \emph{conflicted} run, i.e., a run where processes always decide on invalid output configurations, by extending conflicted prefixes. Whereas a bivalent prefix leads to more than one decision value eventually, a conflicted prefix already yields an output configuration that violates agreement. A bivalency argument is not enough to prove stabilizing consensus impossible, however.

Afek and Gafni~\cite{AG13} showed that asynchronous wait-free shared memory models such as the \emph{Iterated Immediate Snapshot} (\IIS) model can also be represented by a synchronous message adversary. We extend the message adversary formulation of the \IIS model by adding up to $f$ read-omission faults per round to the snapshot operations. We call this new model the \emph{Lossy Iterated Immediate Snapshot model} $\LIIS(f)$. In particular, in the case of only 2 processes, $\LIIS(1)$ corresponds to the \DLL model. The $\LIIS(f)$ model is loosely related to the $d$-solo models studied by Herlihy, Rajsbaum, Raynal and Stainer~\cite{HRRS17}, as the presence of read omissions may lead to $d$-solo rounds, i.e., rounds where up to $d$ processes do not receive any information from the rest of the processes.

\begin{toappendix}
\begin{theorem}[\DLL allows Asymptotic consensus]\label{thm:asympdll}
    Asymptotic consensus, where processes are only required to agree in the limit, is solvable in the \DLL.
\end{theorem}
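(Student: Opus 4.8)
The plan is to exhibit an explicit algorithm that achieves asymptotic consensus in the \DLL and argue its correctness. Recall that in the \DLL there are only two processes, say $p$ and $q$, connected by a pair of unidirectional links, each of which may drop its message in any given round, subject to the guarantee that (some) messages are delivered infinitely often. The natural candidate is the classical \emph{midpoint averaging} protocol: each process maintains a current estimate $x$, initialized to its input value; in every round it broadcasts $x$ to the other process; upon reception it updates $x \leftarrow (x + x')/2$ where $x'$ is the value just received, and upon non-reception it leaves $x$ unchanged. Each process's output sequence is simply its sequence of estimates. Validity in the limit is immediate since every estimate is always a convex combination of the two inputs, hence stays in $[\min,\max]$ of the input pair.

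The key quantity to track is the \emph{diameter} $d_k = |x_p(k) - x_q(k)|$ after round $k$. I would first observe that $d_k$ is non-increasing: in a round where neither message is delivered, $d$ is unchanged; in a round where exactly one message is delivered (say $p$ hears $q$ but not vice versa), the new diameter is $|(x_p+x_q)/2 - x_q| = d_k/2$; in a round where both are delivered, both processes recompute the same midpoint, so the new diameter is $0$. Thus $d$ never increases, and it strictly contracts (by a factor of at least $1/2$) in every round in which at least one message is delivered. Since the \DLL guarantees that messages are delivered infinitely often — in fact, in any round at least one of the two links could be silent but not both forever — there are infinitely many contracting rounds, so $d_k \to 0$.

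With $d_k \to 0$ established, convergence of each individual estimate sequence follows from a Cauchy argument: because $x_p$ and $x_q$ each move only toward the other (each new estimate lies in the segment between the two old estimates), the total variation of $x_p$ is bounded by $\sum_k d_k$-type terms, which is finite once the $d_k$ decay geometrically along the subsequence of active rounds; more directly, $|x_p(k+1) - x_p(k)| \le d_k/2$ in an active round and $0$ otherwise, and the $d_k$ form a non-increasing sequence tending to $0$ whose ``jumps'' telescope, so $(x_p(k))_k$ is Cauchy and converges to some limit $L_p$. The same holds for $q$ with limit $L_q$, and since $|L_p - L_q| = \lim d_k = 0$ we get $L_p = L_q$, which is the agreement-in-the-limit requirement. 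Validity in the limit holds because the limit lies in the closed interval spanned by the inputs.

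The main obstacle is purely a bookkeeping one: making the Cauchy/telescoping estimate on the estimate sequences fully rigorous when the contracting rounds are interleaved arbitrarily with inert rounds — one must be careful that $\sum_k |x_p(k+1)-x_p(k)|$ is finite even though we have no a priori bound on how many inert rounds separate consecutive active rounds. This is handled by noting that in inert rounds the increment is exactly zero, so the sum is really over active rounds only, and along active rounds the relevant bound $d/2$ is at most half the current diameter, which itself at least halves each time — hence the active-round increments are dominated by a geometric series. No delicate case analysis beyond the three link-delivery patterns above is needed, so I expect the proof to be short; the corresponding pseudocode is \Cref{proc:asympDLL}.
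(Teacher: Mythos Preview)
Your proposal is correct and follows essentially the same approach as the paper: both use the midpoint-averaging protocol (the paper's \Cref{proc:asympDLL}) and argue that the inter-process distance at least halves in every non-silent round, of which the \DLL guarantees infinitely many. Your diameter-tracking plus Cauchy argument is somewhat more systematic than the paper's brief two-case analysis, but the underlying idea is identical.
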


\begin{proof}
    Consider the following implementation of a well-known averaging algorithm. Note that in \cref{proc:asympDLL} we intentionally resort to a simple pseudocode representation (instead of the model introduced in \cref{sec:messadv}) for readability purposes.
    
    \begin{algorithm}
        \caption{Averaging algorithm adapted to the \DLL. This algorithm solves asymptotic consensus in the \DLL. Code for process p.}\label{proc:asympDLL}
        \DontPrintSemicolon
        $current = \conin_p$\;
        \ForEach{Round $r = 1, 2, 3, ...$}{
            $\texttt{send}(current)$\;
            $values = \texttt{receive}()$\;
            $current = \texttt{avg}(values)$\;
        }
    \end{algorithm}
    
    Consider a communication pattern $\sigma \in \DLL$ and observe the following cases.
  \begin{itemize}
       \item[1] Assume $\sigma(r) = \tpqp$ for some $r$: then both \tp, \tq receive each others value and compute the same average. As they no both hold the same value in $current$, they will never change and have converged with $\epsilon = 0$.
        \item[2] Assume $\sigma(r) \neq \tpqp$ for all $r$: then, w.l.o.g., only \tq receives messages from \tp, so $\sigma(r) \in \{ \tpq, \tnone \}$. Now, every time \tq receives \tp's value, it will halve the distance between the two values. As communication liveness is guaranteed, any required $\epsilon > |current_\tp - current_\tq|$ will be achieved.
    \end{itemize}
\end{proof}
\end{toappendix}

\subsection{Contributions and paper organization}
\label{subsec:Contr}
 In \cref{sec:messadv}, we provide the preliminary definitions and the overall framework that are essential for our results. In \cref{sec:lldll}, we present the Lossy-Link model, and introduce the Delayed Lossy-Link model for which we prove stabilizing consensus to be impossible, namely \cref{thm:stabagreeimp}. This is our first main result and constitutes the foundation for the rest of the paper. In \cref{sec:iisdiis}, we present the \IIS model and our novel $\LIIS(f)$ model, and show that stabilizing consensus is not solvable in $\LIIS(f)$ for any $f \geq 1$, via \cref{thm:stabimpdiis}. Finally, in \cref{sec:conclusion} we discuss the implications of our result and provide some interesting directions for future work.

\section{The System Model}
\label{sec:messadv}

We consider a finite set of process $\Pi = \{p_1, \ldots, p_n \}$ that communicate through message passing via directional point-to-point links in lock-step synchronous rounds. We assume that rounds are communication-closed, i.e., messages sent at a round $r$ will also arrive at round $r$ or will not be delivered. Note that in contrast to \cite{CM19:DC} we assume synchronous process starts (i.e., starting round $s_p=1$ for all $p$) as this special case already facilitates our impossibility result. Obviously, our result then carries over to the more general model presented in the original paper.

Each process $p_i$ is given an initial input value $\conin_i$ from a finite set of inputs $I$. We say that an initial global configuration is an $n$-tuple $(p_i,\conin_i)_{i=1}^n$ of process-input pairs. We denote the set of possible \emph{input configurations} as $\mathcal{I}$. Similarly, an \emph{output configuration} is an $n$-tuple $(p_i, \conout_i)_{i=1}^n$. We denote the set of possible output configurations by $\mathcal{O}$.

We model distributed problems as input/output relations called \emph{tasks} that map a set of input configurations $\mathcal{I}$ to a set of valid output configurations. Formally, a task is a triple $T=\langle \mathcal{I}, \mathcal{O}, \task \rangle$ where $\task: \mathcal{I} \rightarrow 2^{\mathcal{O}}$ is a validity map that determines a subset of valid output configurations from a single input configuration. Tasks are widely used in the literature to capture problems in distributed setting, see for example~\cite{RS13:PODC, gafni_distributed_2010, galeana_continuous_2022}.

The communication is governed by a message adversary that determines which processes are able to communicate at a given round. We use a directed graph $G_r$ on $\Pi$, which we call the \emph{communication graph}, to represent the message exchange at round $r$. We denote by $In_G(p_i)$ the in-neighborhood of $p_i$ in graph $G$, which always includes $p_i$. Note that $In_{G_r}(p_i)$ represents the set of processes from which $p_i$ receives a message at round $r$. We represent the communication in an execution by a communication graph sequence $\sigma = (G_i)_{i=1}^\infty$, called a \emph{communication pattern}, where $\sigma(k) = G_k$. We denote by $G \comp \sigma$ the concatenation of $G$ with $\sigma$ and by $G^k$ the sequence of $G$ repeated $k$ times. If $\sigma = (G_i)_{i=1}^\infty$ is a communication pattern, we say that $(G_i)_{i=1}^k$ is the $k$-\emph{prefix} of $\sigma$, denoted by $\sigma|_1^k$. Conversely, $\sigma|_k^\infty = (G_i)_{i=k}^\infty$ denotes the suffix starting from round $k$. We define $\mu = (H_i)_{i=1}^\infty$ to be a sub-sequence of $\sigma = (G_i)_{i=1}^\infty$, denoted by $\mu \sqsubseteq \sigma$, iff there is a strictly increasing sequence $s: \N \rightarrow \mathbb{N}$ such that $H_i = G_{s(i)}$. Note that the sub-sequence relation $\sqsubseteq$ is a partial order on sequences with the same domain and co-domain. A \emph{message adversary} is just a set of communication patterns $\mathcal{M}$.

Since our focus is on solvability, we assume for convenience that the processes execute a \emph{full-information} \emph{flooding protocol}. That is, at any given round $r$, each process will send a copy of its local state, also called \emph{view} to all possible recipients. For terminating protocols and terminating tasks, we consider strong termination, i.e., a process stops its execution whenever a decision is made by the protocol. Thus, processes will stop sending messages and updating their views once the protocol has output a decision value. In contrast, for stabilizing tasks and protocols, we consider that the protocol never terminates and processes keep communicating and updating their views forever.

We define the \emph{view} of a process $p_i$ at the end of a round $r$ under $\sigma = (G_i)_{i=1}^\infty$ and initial configuration $\config[0]$ as: 
$view_{\config[0] : \sigma}(p_i,r) := \langle V(\config[0] : \sigma, p_i, r), p_i, r \rangle$, where $V(\config[0]:\sigma, p_i, r) = \{ view_{\config[0]:\sigma}(p_j, r-1) \; \vert \; p_j \in In_{\sigma_{r-1}}(p_i) \}$. We define $view_{\config[0]:\sigma}(p_i, 0) := \langle \{ \conin_i \}, p_i, 0 \rangle$, where $\conin_i$ is the input value of process $p_i$ in $\config[0]$.

We also employ a \emph{subview} relation, denoted by $view_{\config[0]:\sigma}(p_i, r) \prec view_{\config[0]:\sigma}(p_j, s)$, which can be defined inductively: $view_{\config[0]:\sigma}(p_j, s) \prec view_{\config[0]\sigma}(p_i,r)$ iff $s = r-1$ and  $view_{\config[0]:\sigma}(p_j, s) \in V(\config[0]:\sigma, p_i, r)$; or $view_{\config[0]_\sigma}(p_j, s) \prec view_{\config[0]:\sigma}(p_k, r-1)$ for some \\ $view_{\config[0]:\sigma}(p_k, r-1) \in V(\config[0]:\sigma, p_i, r)$. Intuitively, this subview relation captures all the previous local states that a given process $p_i$ is aware of at a given round $r$.

We denote by $\leafs(view_{\config[0]:\sigma}(p_i,r)) := \{ \conin_j \; \vert \; \langle \{ \conin_j\},p_j, 0 \rangle \prec view_{\config[0]:\sigma}(p_i,r) \}$ the set of inputs $p_i$ has heard of by round $r$. Whenever $\config[0]$ is clear from the context or not relevant, we will omit it in favor of notational simplicity.

We define the kernel $Ker(\sigma)$ of a sequence $\sigma$ as the set of processes that reach all other processes infinitely often, i.e., processes in $Ker(\sigma)$ are able to disseminate their local state at any point in the run. Formally, $p \in Ker(\sigma)$ if and only if $\forall r>0 : \exists r'>r : \forall q\in\Pi : view_\sigma(p,r)\prec view_\sigma(q,r')$. 

We define the \emph{global configuration} at the end of round $r$ under a communication pattern $\sigma$ as an $n$-tuple $\mathcal{C}^r_{\sigma} := (view_{\sigma}(p_i,r))_{i=1}^n$.

As we only consider deterministic protocols, global configurations are fully determined by the input configuration, and the communication pattern $\sigma$. Therefore, in the context of this paper, we will consider that a run consists of the input configuration and communication pattern. Furthermore, whenever it is clear from the context, or the input configuration is not relevant, we will use communication pattern and run interchangeably. 

We use views to determine indistinguishability between global configurations. More precisely, a global configuration $\mathcal{C}^r_{\alpha: \sigma}$ is \emph{indistinguishable} from a global configuration $\mathcal{C}^r_{\beta: \mu}$ for a process $p_i \in \Pi$ iff $view_{\alpha:\sigma}(p_i,r) = view_{\beta: \mu}(p_i,r)$, where $\alpha, \beta \in \mathcal{I}$ are input configurations, and $\sigma, \mu \in \mathcal{M}$ are communication patterns.

Note that any protocol can be split into a full-information flooding part that generates views and sends messages, and a decision map that produces an output value. If the output values form a valid configuration, then we say that the protocol solves a problem.

Formally, we define a \emph{decision map} of a protocol \Prot as a function that maps process views generated by the full information flooding protocol to output values. We will denote a decision map as $\delta_{\Prot}: Views(\mathcal{M}) \rightarrow O$, where $Views(\mathcal{M})$ is the set of possible views induced by the message adversary $\mathcal{M}$ under the protocol \Prot, and $O$ is the set of possible output values. Note that for terminating protocols, an additional undecided value $\bot$ may be used for representing that the protocol is not yet ready to decide. However, since we are focusing only on stabilizing protocols, we require that a decision is made at every view.

We define the \emph{stabilization} property for tasks, which can be thought of as a relaxation of termination. Whereas termination requires a process to irrevocably choose its output value once and possibly even finalize its execution, stabilization allows the protocol to run indefinitely and correct its output finitely often.

Nevertheless processes must eventually \emph{stabilize} on a common constant output value. We call the round on which a process $p_i$ stabilizes, i.e., never changes its output again, the \emph{stabilization round} $s_i$. It should be noted that stabilization may be reached agnostically. This means that there might be protocols where processes are guaranteed to stabilize, yet they are possibly never aware that they have already reached a stable state.

We model stabilizing problems as tasks, with the following definition of solvability:

\begin{definition}[Stabilizing Task Solvability] 
    A protocol \Prot solves a stabilizing task $T= \langle\mathcal{I}, \mathcal{O}, \task \rangle$ iff for any run $\config[0] : \sigma$, there exists an output configuration $(p_i, \conout_i)_{i=1}^n \in \task(\config[0])$ and a stabilization round $s_i$ for every process $p_i$, such that:
    \begin{equation*}
        \forall r \geq s_i: \delta_{\mathcal{P}}(view_{\config[0] : \sigma}(p_i,r)) =  \conout_i \textrm{ where $\conout_i$ denotes the output in $(p_i, \conout_i)_{i=1}^n$}
    \end{equation*}
    A protocol that solves a stabilizing task is called a stabilizing protocol.
\end{definition}

\subsection{Consensus and stabilizing consensus}
Consensus is traditionally defined via the following three properties.
\begin{enumerate}
    \item[\textbf{1.}] \textbf{Termination:}
        Any process $p_i$ eventually irrevocably decides on an output value $\conout_i$ at a round $f_i$.
    \item[\textbf{2.}] \textbf{Validity:}
        If a process $p_i$ decides $\conout_i$, then $\conout_i$ was the input value of some process $p_j$, i.e $\conout_i = \conin_j$.
    \item[\textbf{3.}] \textbf{Agreement:}
        If process $p_i$ decides on an output value $\conout_i$, and process $p_j$ decides $\conout_j$, then $\conout_i = \conout_j$.
\end{enumerate}

Note that we assume that $I$ is finite and that any combination of input configurations $\config[0]\in I^{\Pi}$ is possible. Therefore, the above conditions do not define a single unique task, but rather a family of tasks. For instance, binary consensus (where $|I|=2$), multi-valued consensus (where $|I|>2$) correspond to different tasks, while both are instances of consensus.

Stabilizing consensus is defined by replacing the termination property \textbf{1.} by the weaker condition \textbf{1'.}. Since stabilizing protocols are allowed to decide at each round, the agreement condition \textbf{3.} also needs to be adjusted to \textbf{3.'}.

\begin{enumerate}
    \item[\textbf{1'.}] \textbf{Stabilization:} Any process $p_i$ eventually stabilizes on some valid output value $\conout_i$ after some round $s_i$.
    \item[\textbf{3'.}] \textbf{Stable Agreement:}
        If process $p_i$ \emph{stabilizes} on an output value $\conout_i$, and another process $p_j$ \emph{stabilizes} on $\conout_j$, then $\conout_i = \conout_j$.
\end{enumerate}

Note that a task does not consider any termination or stabilization condition. Therefore, the stabilizing consensus problem, as well as terminating consensus, is determined by a consensus task and a stabilizing (respectively terminating) condition.

\begin{definition}[Consensus Task]\label{def:taskcons}
    We define the consensus task with respect to an input configuration set $\mathcal{I}$ as $T_{con(\mathcal{I})} := \langle \mathcal{I}, \mathcal{O}_{con(\mathcal{I})}, \task_{con(\mathcal{I})}\rangle$.
    
    Possible output configurations are vectors satisfying agreement
    $\mathcal{O}_{con(\mathcal{I})} := \{ (p_i,\conout_i)_{i=1}^n \; \conout_i \; v \in I(\mathcal{I}) \}$, where $I(\mathcal{I})$ denotes the set of input values induced by the input configurations $\mathcal{I}$, and
    the task map preserves validity
    $\task_{con(\mathcal{I})}( (p_i, \conin_i)_{i=1}^n ) := \{ (p_i, \conout)_{i = 1}^n \;|\; \conout = \conin_j \text{ for } j\in\{1,\dots,n\}\}$.
\end{definition}

\section{The Lossy-Link, and Delayed Lossy-Link Message Adversaries}
\label{sec:lldll}

We will first introduce two particular message adversaries defined for only two processes, namely \emph{Lossy-Link} (denoted by $\LL$) and \emph{Delayed Lossy-Link} (denoted by $\DLL$). Lossy-Link was introduced by Santoro and Widmayer~\cite{SW89} and revisited in~\cite{SWK09,CGP15}, where it was shown that consensus is impossible even if at most a single message may be lost in every round. We will show that stabilizing consensus is solvable in the \LL but impossible in the \DLL. 

Lossy-Link consists of 2 processes that communicate through a bi-directional link that may lose at most one message per round. For readability purposes, throughout this section we will denote the set of processes as $\Pi = \{ \tp, \tq \}$. 

\begin{definition}[Lossy-Link message adversary]\label{def:ll}
    We define the \LL as\footnote{Throughout this paper, $\omega$ denotes the first infinite ordinal, it is very convenient for expressing regular infinite sequences in a compact way.}:
    \[ LL:= \{ \tpq, \tqp, \tpqp \}^\omega . \]
    
\end{definition}

Although the Lossy-Link message adversary prohibits solving terminating consensus~\cite{SW89}, it admits a simple stabilizing protocol that solves stabilizing consensus as it satisfies both that any \LL pattern has a non empty kernel, and any \LL pattern has a trivially bounded broadcast time. Hence, we provide a 2-process instance of the MinMax Algorithm, introduced in ~\cite{CM19:DC}, for stabilizing consensus. Since we adapt it to our full-information and flooding model, it suffices to provide a decision map $\delta_{MinMax}: Views ( \LL) \rightarrow O$,\\
$\delta_{MinMax}(view_\sigma(p_i,r)) := \underset{x \in V(view_\sigma(p_i,r))}{\max} \{\min (\leafs(x))\}$

We omit a proof (for a complete formal treatment see \cite{CM19:DC}), but provide a sketch. The case where both have identical inputs is trivial, so assume w.l.o.g. that $\conin_\tp < \conin_\tq$. We distinguish three cases, either (a) $\tq$ never hears from $\tp$, or conversely (b) $\tp$ never hears from $\tq$, or (c) both hear from each other. In (c) both eventually know both values after some round $r$. They then always choose $\conin_\tp$, as its the smallest value they know and as its the smallest value in any message they will receive, starting from round $r+1$.

Case (a) is only possible in $(\tqp)^\omega$, so $\tp$ always hears from and chooses $\conin_\tq$, as its the maximum heard of in the last round ($\tq$ also chooses $\conin_\tq$ by validity). Similarly, case (b) is only possible in $(\tpq)^\omega$, where after round $1$ the minimum $\tq$ has heard of is $\conin_\tp$, and $\tq$ will never receive a larger value, thus always chooses $\conin_\tp$ ($\tp$ chooses the same by validity).

Now consider a different setting, where the communication link between $\tp$ and $\tq$ is also allowed to drop both messages, but only for at most $k$ consecutive rounds for some fixed $k$. We call this model the \emph{Bounded-Delay Lossy-Link}, abbreviated as $\BDLL(k)$. Following our previous notation, $\tnone$ denotes the silent graph.

\begin{definition}[Bounded-Delay Lossy-Link message adversary]\label{def:kbdll}
    We define the $k$-Bounded-Delay Lossy-Link message adversary, denoted by $\BDLL(k)$, as: 
   \[ \BDLL(k) := ( \bigcup_{i=0}^k \{ \tnone^i \} \cdot \{ \tpq, \tqp, \tpqp \})^{\omega}. \]

   More generally, we define the Bounded-Delay Lossy-Link message adversary, as 
   \[ \BDLL:= \bigcup_{k\in\mathbb{N}} \BDLL(k) . \]
\end{definition}

Note that, as some communication is always guaranteed, $\BDLL(k)$ has a non-empty kernel for any $k$, satisfying (i) and trivially (ii). This also applies to the $\BDLL$, where any sequence is an instance of some $\BDLL(k)$.

Nevertheless, just $\BDLL(1)$ already breaks $\delta_{MinMax}$, as it is not guaranteed anymore that the maximum in the system reaches all other processes in every round: Consider the graph sequence $( \tqp \comp \tnone )^{\omega}$ with $\conin_\tp < \conin_\tq$, where \tp always alternates its decision  value: it will choose $\conin_\tp$ if the last communication graph was $\tnone$, and $\conin_\tq$ if it was $\tqp$, as it only considers messages from the last round. In order to circumvent this alternation, we could adapt $\delta_{MinMax}$ to look back two rounds instead of only looking at the last one. In fact, we can generalize this idea, by adapting $\delta_{MinMax}$ to consider the past $k$ rounds and thus making it suitable for $\BDLL(k)$: $ \delta_{MinMax}^k(view_{\sigma}(p_i,r)) :=
    \underset{x \in V^k(view_{\sigma}(p_i, r))}{\max} \{ \min (\leafs(x)) \}$,
where $V^k(view_{\sigma}(p_i, r)) := \bigcup_{j = 0}^{min(k,r-1)} V(view_\sigma(p_i, r-j))$ is the set of views $p_i$ has received in the last $k$ rounds (when $k<r$).

A similar correctness argument as the one for $\delta_{MinMax}$ can be used to show that $\delta_{MinMax}^k$ solves stabilizing consensus in $\BDLL(k)$. However, any $\delta_{MinMax}^k$ will fail in $\BDLL(k+1)$. Charron-Bost and Moran cleverly fixed this in \cite{CM19:DC} and provided an algorithm capable of solving stabilizing consensus in $\BDLL(k)$ for \emph{any} fixed $k$, i.e., in $\BDLL$. Intuitively their safe MinMax algorithm works as follows: Any sequence $\sigma \in \BDLL$ is also a member of $\BDLL(k)$, for some $k$. Of course, one cannot choose some $l$ that is always larger than any $k$, but, as the sequence $\sigma$ is chosen beforehand and $k$ therefore fixed, one can gradually \emph{increase} $l$ such that it eventually surpasses any fixed $k$. And since $\delta_{MinMax}^l$ also correctly runs for any $k < l$, once $l$ surpassed $k$ it will run correctly forever. Note that we will formalize this in \cref{sec:iisdiis} for $n$ processes.

Nevertheless, we will prove that \emph{no stabilizing protocol} can exist if there is \emph{no fixed bound} $k$ on the length of consecutive silence periods. We call this model the \emph{Delayed Lossy-Link} message adversary, which can be seen as the limit of the \BDLL.

\begin{definition}[Delayed Lossy-Link Message Adversary]\label{def:dll}
    We define the Delayed Lossy-Link Message Adversary, denoted as \DLL as:
    \[ \DLL := (\tnone^* \cdot \{ \tpq, \tqp, \tpqp \})^\omega , \]

    where $\tnone^*$ is the Kleene star of $\tnone$, i.e.,
    $\tnone^* = \bigcup_{i \in \N\cup\{ 0\} }\{ \tnone^i\}$
\end{definition}
Note that the \DLL again has a non-empty kernel, as some communication always happens. It hence satisfies (i), does not satisfy (ii), as there is no bound on how many consecutive silent rounds may happen.

Also note that any communication pattern $\sigma\in\DLL$ is essentially a sequence in the \LL with arbitrarily long but finite periods of no communication in-between. Consequently, any communication pattern $\sigma\in\DLL$ has a unique corresponding silence-free communication pattern $\mu \in \LL$ where $\mu\sqsubseteq\sigma$, which we call the \emph{silence-free core} of $\sigma$. Reciprocally, we say that $\sigma$ is a delayed pattern of $\mu$.

\subsection{Stabilizing consensus is impossible in the DLL}\label{subsec:stabimp}

In this sub section, we prove one of our main results, namely the impossibility of stabilizing consensus in the \DLL.

The proof strategy is the following: we assume by contradiction that there is a stabilizing protocol that can solve consensus in the \DLL. Then we show that it is possible to construct a run that infinitely often decides for a conflicting configuration, i.e., with different decision values for \tp and \tq. The crucial argument for constructing a conflicting run is \cref{lem:patprot}, which proves that any stabilizing protocol in the \DLL can be translated into a patient protocol. A patient protocol allows processes to change their decision value only in rounds where they receive a message from another process.

We start by defining conflicted prefixes, which are somewhat similar (but different) to bivalent prefixes: whereas a bivalent prefix is a prefix that may eventually lead to mutually incompatible output configurations, a conflicted prefix is one that currently outputs an invalid configuration. For instance, the communication pattern $\tpq^\omega \in LL$ under $\delta_{MinMax}$  with $\conin_\tp > \conin_\tq$ yields a bivalent prefix for any round $r$. Nevertheless, $\delta_{MinMax}$ never outputs a conflicted prefix after round $1$ in this setting.

\begin{definition}[Conflicted prefix] \label{def:conflpref}
    Let $\mathcal{P}$ be a stabilizing protocol,  $T = \langle \mathcal{I},\mathcal{O} , \task \rangle$ a task, and $\sigma = (G_i)_{i \in \mathbb{N}}$ a communication pattern. We say that a prefix $\sigma^{k}$ is \emph{conflicted} with input configuration $\config[0] = (p_i, \conin_i)_{i=1}^n$, iff $\delta_{\mathcal{P}}(\config[r]_\sigma ) \notin \task(\config[0])$.
\end{definition}

Respectively, we say that a communication pattern $\sigma$ is \emph{conflicted infinitely often} iff there is a strictly increasing sequence $s: \mathbb{N} \rightarrow \mathbb{N}$ such that $\sigma|_1^{s(i)}$, the $s(i)$-prefix of $\sigma$, is conflicted. Clearly, a stabilizing protocol solves a task under a message adversary $\mathcal{M}$ iff there are no communication patterns in $\mathcal{M}$ that are conflicted infinitely often.

Note that the possibility of arbitrary long silence periods adds a new layer of uncertainty, particularly in the context of stabilizing protocols. In fact, this removes any possibility of acquiring any useful information through silence. For instance, in \LL, even if process $\tp$ does not receive a message in round $r$, it is at least sure that $\tq$ will receive a message in round $r$. The effect of this uncertainty on stabilizing protocols is reflected by the following definitions, \cref{lem:patience}, and \cref{lem:patprot}. 

Let $\pi$ be a prefix of length $k$. For notational simplicity, we will use $\delta_{p_i}(\pi)$ to denote $\delta_{\mathcal{P}}(view_{\pi} (p_i, k))$, whenever the protocol $\mathcal{P}$ is either unique or not immediately relevant to the context; similarly $\delta(\config[0] : \pi)$ denotes the output configuration $(p_i,\delta_{p_i}(\config[0] : \pi))_{i=1}^n$. Furthermore, if $\sigma$ is an infinite communication pattern and $\Prot$ is a stabilizing protocol, we denote by $\decfinalp{\config[0]: \sigma}{p_i}$ the stable decision by $p_i$ under $\sigma$ with initial configuration $\config[0]$.

\begin{definition}[Patient prefix] \label{def:patpref}
     Let $\mathcal{P}$ be a stabilizing protocol, and $T = \langle \mathcal{I},\mathcal{O} , \task \rangle$ a stabilizing task in the \DLL model. We say that a prefix $\pi$ is a \emph{patient} prefix for input configuration $\config[0]$, iff for every $r>0$; 
      $\delta(\config[0] : \pi ) = \delta(\config[0]: \pi \comp \tnone^r).$
\end{definition}

\begin{lemma}[Patience Lemma]\label{lem:patience}
    Let $\mathcal{P}$ be a stabilizing protocol for a task $T = \langle \mathcal{I},\mathcal{O} , \task \rangle$, $\config[0] \in \mathcal{I}$, and $\pi$ any prefix in the \DLL model. There exists $k \geq 0$ such that $\config[0] : \pi \comp \tnone^k$ is a patient prefix.
\end{lemma}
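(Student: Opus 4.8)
The plan is to use the fact that after a prefix $\pi$, appending silent rounds $\tnone$ does not change any process's view in a way that adds new information --- in fact, in the full-information flooding model, appending $\tnone$ to $\pi$ produces a strictly increasing chain of views, but the \emph{leaf sets} and more importantly the relevant decision-generating content stabilizes. The key observation is that the decision map $\delta_{\mathcal P}$ takes values in the finite set $O$ of output values (since $I$ is finite and validity restricts outputs to input values), so the sequence of output configurations $\bigl(\delta(\config[0] : \pi \comp \tnone^r)\bigr)_{r \geq 0}$ lives in a finite set $\mathcal O$.

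First I would argue that along a run where only silent rounds are appended, no process ever receives a message again, so the only dynamics come from the decision map applied to an ever-growing-but-silence-padded view. I would then invoke the stabilization property of $\mathcal P$: the communication pattern $\pi \comp \tnone^\omega$ (i.e. $\pi$ followed by infinitely many silent rounds) is itself a valid pattern in the \DLL model --- indeed $\tnone^\omega$ is permitted since the \DLL allows silence of arbitrary finite length between communication steps, and we can realize $\pi \comp \tnone^\omega$ as a legitimate limit: wait, one must be slightly careful here, because $\tnone^\omega$ alone is \emph{not} in the \DLL (it has no communication ever). The correct move is: $\pi \comp \tnone^\omega$ need not be in $\DLL$, so instead I consider, for contradiction, that for every $k$ the output configuration changes between $\pi \comp \tnone^k$ and $\pi \comp \tnone^{k+1}$ for some process. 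Actually the cleanest route is to extend $\pi \comp \tnone^\omega$ to a genuine \DLL pattern by tacking on, say, $(\tpqp)^\omega$ far in the future --- but that changes nothing, so better: simply observe that $\pi \comp \tnone^\omega \comp \mu$ for any $\mu \in \LL$ is in $\DLL$; wait, $\pi$ itself need not end in a communication step, but we can always take $\pi' = \pi \comp G$ for $G \in \{\tpq,\tqp,\tpqp\}$ if needed, or simply note that any prefix of a \DLL pattern extends to a \DLL pattern. Let me phrase it as: pick any $\mu \in \DLL$ and consider $\rho = \pi \comp \tnone^\infty$ is not used; instead use that $\pi \comp \tnone^k \comp \tau$ for suitable $\tau$ realizing the rest lies in \DLL for every $k$.

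So, the concrete argument: Consider the communication pattern $\hat\sigma := \pi \comp \mu$ where $\mu$ is chosen so that $\hat\sigma \in \DLL$ (such $\mu$ exists since $\pi$ is a prefix of some \DLL pattern). We do \emph{not} even need $\mu$; the point is to consider the infinite pattern $\sigma^* := \pi \comp \tnone \comp \tnone \comp \cdots \comp \tnone \comp \rho$ where after sufficiently many silent rounds we resume with any legal $\rho \in \LL$. Hmm, the honest statement: since $\DLL = (\tnone^* \cdot \{\tpq,\tqp,\tpqp\})^\omega$, the pattern $\pi \comp \tnone^k \comp (\tpqp)^\omega$ — provided $\pi$ itself already ended appropriately — is in \DLL for every $k \in \N$. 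By the stabilization property applied to this family (or to the single pattern $\pi \comp \tnone^\omega \comp (\tpqp)^\omega$, noting $\tnone^\omega$ is \emph{not} allowed but $\tnone^k$ followed eventually by communication is), each process $p_i$ stabilizes. The finiteness of $\mathcal O$ forces the sequence $\delta(\config[0]:\pi\comp\tnone^r)$ to be eventually constant: if not, it would oscillate among finitely many configurations, and we could splice these silent segments into a single infinite \DLL pattern along which some process changes its decision infinitely often, contradicting stabilization. Concretely, if there were no $k$ making $\pi \comp \tnone^k$ patient, then for every $k$ there is $r_k > 0$ with $\delta(\config[0]:\pi\comp\tnone^k) \neq \delta(\config[0]:\pi\comp\tnone^{k+r_k})$; chaining these gives infinitely many rounds in $\pi \comp \tnone^\omega$ (embedded into a genuine \DLL run by appending any communication suffix, which does not affect the already-fixed silent-prefix views) at which the decision changes --- contradicting that the protocol stabilizes on that run. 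Taking $k$ large enough to be past the last change yields the patient prefix.

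The main obstacle I anticipate is the subtlety that $\tnone^\omega$ is not itself a \DLL pattern, so one cannot directly apply the stabilization guarantee to "$\pi$ followed by eternal silence." The fix is to note that the \emph{views} (hence the outputs) of all processes along $\pi \comp \tnone^r$ are completely unaffected by whatever communication happens after round $|\pi| + r$, so one may freely complete the pattern to a legitimate \DLL pattern (e.g. by appending $(\tpqp)^\omega$) without changing any of the output configurations $\delta(\config[0]:\pi\comp\tnone^r)$ for the rounds in question; then stabilization on that completed pattern, combined with the finiteness of $\mathcal O$, delivers the desired $k$. A secondary point requiring a line of care is ensuring $\pi$ extends to a \DLL pattern at all, which holds because every prefix of a pattern in a set of the form $(R)^\omega$ with $R$ a regular set extends to a full pattern in that set.
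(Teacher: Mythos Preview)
Your proposal has a genuine gap at the crucial step. You correctly identify the obstacle: $\pi \comp \tnone^\omega$ is not in $\DLL$, so you cannot directly apply the stabilization guarantee to eternal silence. But your proposed fix --- complete to a legitimate $\DLL$ pattern by appending, say, $(\tpqp)^\omega$ --- does not work. For any \emph{fixed} $N$, the run $\pi \comp \tnone^N \comp (\tpqp)^\omega$ is in $\DLL$ and stabilizes, but its stabilization round may well lie beyond $|\pi| + N$; nothing prevents that round from growing with $N$. And no \emph{single} $\DLL$ run has $\pi \comp \tnone^r$ as a prefix for \emph{all} $r$, so you cannot ``splice these silent segments into a single infinite $\DLL$ pattern'' as you suggest --- any such splice is exactly $\pi \comp \tnone^\omega$, which is excluded. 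Finiteness of $\mathcal O$ alone does not rule out oscillation: a sequence in a finite set need not be eventually constant.

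The missing idea, which the paper exploits, is per-process indistinguishability from a \emph{genuine} $\DLL$ run. From $\tp$'s viewpoint, the prefix $\pi \comp \tnone^r$ is indistinguishable from $\pi \comp \tpq^r$ (in both, $\tp$ receives nothing after $\pi$), and $\pi \comp \tpq^\omega$ \emph{is} in $\DLL$ since it has communication in every round. Stabilization applied to this single run gives an $r_1$ after which $\delta_{\tp}$ is constant along it; by indistinguishability the same constancy transfers to $\delta_{\tp}(\pi \comp \tnone^r)$ for all $r \geq r_1$. Symmetrically, $\tq$ cannot distinguish $\pi \comp \tnone^r$ from $\pi \comp \tqp^r$, and $\pi \comp \tqp^\omega \in \DLL$ yields an $r_2$. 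Then $k = \max\{r_1, r_2\}$ works. The key is to find, for each process separately, a legitimate $\DLL$ run that the process cannot tell apart from the silent extension --- something your completion with $(\tpqp)^\omega$ cannot achieve, since both processes receive messages there.
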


\begin{proof}
    Note that $\pi \comp \tpq^{\omega}$ is an admissible execution in the \DLL message adversary. Since $\Prot$ is a stabilizing algorithm, $\tp$ must eventually stabilize on a decision $\decfinalp{\config[0]: \pi \comp \tpq^{\omega}}{\tp}$. Hence there is some minimal $r_1 \geq 0$ such that for any $r> r_1$,
     $ \delta_{\tp}(\config[0] : \pi\cdot \tpq^{r}) =
        \decfinalp{\config[0]: \pi \comp \tpq^{\omega}}{\tp}.$
    
    Reciprocally, there is some minimal $r_2 \geq 0$ such that for any $r>r_2$, \\ $\delta_{\tq}(\config[0] : \pi\cdot \tqp^{r}) = 
        \decfinalp{\config[0] : \pi \comp \tqp^{\omega}}{\tq}.$

    Note that $\config[0] : \pi\cdot \tpq^{r_1}$ is indistinguishable from $\config[0] :\pi\cdot \tnone^{r_1}$ to $\tp$. Symmetrically  $\config[0] : \pi\cdot \tqp^{r_2}$ is indistinguishable from $\config[0] : \pi\cdot \tnone^{r_2}$ to $\tq$.

    Let $k = \max\{r_1,r_2\}$, and consider $\pi'= \config[0] : \pi \comp \tnone^k$. For any $k'$, $\pi'\comp \tnone^{k'}$ is indistinguishable from $\config[0] : \pi \comp \tpq^{k+k'}$ to $\tp$, and $\pi'\comp \tnone^{k'}$ is indistinguishable from $\config[0] : \pi \comp \tqp^{k+k'}$ to $\tq$.
    
    Moreover, since $k+k'\geq k$, it follows that $\decfinalp{\config[0] : \pi \comp \tpq^{\omega}}{\tp} = \delta_{\tp}( \pi' \comp \tnone^{k'}) = \delta_{\tp}(\pi')$, and $\decfinalp{\config[0] : \pi \comp \tqp^{\omega}}{\tq} = \delta_{\tq}( \pi' \comp \tnone^{k'}) = \delta_{\tq}(\pi')$. Consequently, $\pi' = \config[0] : \pi \comp \tnone^k$ is indeed a patient prefix.
\end{proof}

Note that \cref{lem:patience} allows us to define the \emph{patience} of a prefix $\pi$, for a particular stabilizing protocol $\Prot$ in the \DLL and a particular input configuration $\config[0]$.

\begin{definition}[Patience]\label{def:patience}
   Let $\Prot$ be a stabilizing protocol, $\config[0] \in \mathcal{I}$ be an input configuration, and $\pi$ a prefix in the \DLL. We say that $k \in \mathbb{N}$ is the patience of $\pi$ for $\Prot$ and $\config[0]$, if $k$ is minimal such that $\config[0] : \pi \comp \tnone^k$ is a patient prefix.
\end{definition}

\begin{definition}[Patient Protocol]\label{def:patprot}
    Let $\Prot$ be a stabilizing protocol in the \DLL. We say that $\Prot$ is patient if the patience of any prefix $\pi$ is $0$.
\end{definition}

\begin{lemma}[Patient Protocol Reduction]\label{lem:patprot}
    Let $\Prot$ be a protocol that solves a stabilizing task \\ $T = \langle \mathcal{I}, \mathcal{O}, \task \rangle$ in the 
    \DLL. Then there exists a patient protocol $\Prot'$ that solves $T$.
\end{lemma}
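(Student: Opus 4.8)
The plan is to build $\Prot'$ by having each process simulate $\Prot$, but "collapse" silent rounds: process $p_i$ maintains a simulated view and only advances the simulation / updates its output when it actually receives a message from the other process. Concretely, at a round where $p_i$ hears only from itself (a silent round for $p_i$), $\Prot'$ outputs the same value as at the previous round; at a round where $p_i$ hears from $p_j$, $\Prot'$ feeds the accumulated genuine information into $\Prot$ and outputs whatever $\Prot$ would output. The key point, which I would extract from \cref{lem:patience} and the indistinguishability observations in its proof, is that when $\Prot$ is run on $\pi \comp \tnone^k$ with $k$ at least the patience of $\pi$, the output is already "settled" with respect to further silence — so $\Prot'$ can, in effect, always pre-pad each prefix with enough silence to reach the patient version before reading off $\Prot$'s decision.

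**First I would make the construction precise.** Given a prefix $\pi$ observed by $p_i$, let $\pi_i$ denote the "desilenced" prefix obtained by deleting from $\pi$ every round in which $p_i$ received nothing new (equivalently, keeping only the rounds that contribute to $p_i$'s subview growth). Define $\delta'_{p_i}(\pi) := \delta_{p_i}(\pi_i \comp \tnone^{k(\pi_i)})$, where $k(\pi_i)$ is the patience of $\pi_i$ under $\Prot$ and the relevant input configuration — this is well-defined by \cref{lem:patience}, and by \cref{def:patpref} padding with still more silence does not change the value, so the choice of exactly $k(\pi_i)$ (rather than any larger amount) is harmless. Note $\pi_i$ and $k(\pi_i)$ are both computable from $p_i$'s view alone, since the silent rounds for $p_i$ are exactly the rounds where $p_i$'s view did not grow, and patience is a function of the (de-silenced) prefix and input. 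So $\delta'$ is a legitimate decision map for a full-information protocol $\Prot'$.

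**Then I would verify the three required properties of $\Prot'$.** (1) $\Prot'$ is patient: appending $\tnone^r$ to $\pi$ leaves $\pi_i$ unchanged for both processes (silent rounds are dropped), hence $\delta'_{p_i}(\pi \comp \tnone^r) = \delta'_{p_i}(\pi)$, so every prefix has patience $0$. (2) $\Prot'$ stabilizes: fix a run $\config[0]\colon\sigma$ with $\sigma \in \DLL$; its silence-free core $\mu \in \LL$ is an infinite $\LL$-pattern, and as more non-silent rounds of $\sigma$ are consumed, the de-silenced prefixes $\sigma|_1^{m}{}_i$ converge (as prefixes) to the corresponding prefixes of $\mu$ — more carefully, for each process the de-silenced prefix it sees after round $m$ of $\sigma$ is a prefix of $\mu$, and its length tends to infinity. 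Since $\Prot$ is stabilizing, $\Prot$'s output on $\mu \comp \tnone^{\bullet}$-padded prefixes of $\mu$ stabilizes (using that padding with patience-many silent rounds doesn't disturb the eventual value), so $\delta'_{p_i}$ along $\sigma$ stabilizes to the same value as $\Prot$ stabilizes to along $\mu$. (3) $\Prot'$ is correct: the stable value of $\delta'_{p_i}$ on $\config[0]\colon\sigma$ equals the stable value of $\delta_{p_i}$ on $\config[0]\colon\mu$, and since $\Prot$ solves $T$ and $\mu$ is an admissible $\DLL$-pattern (indeed $\LL \subseteq \DLL$), those stable values form an output configuration in $\task(\config[0])$. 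Hence $\Prot'$ solves $T$ and is patient.

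**The main obstacle** I expect is property (2) — making rigorous the claim that the de-silenced prefixes seen by the two processes along $\sigma$ "converge to" the silence-free core $\mu$ and that $\Prot$'s behaviour on these patience-padded prefixes inherits $\Prot$'s stabilization on $\mu$. The subtlety is that the two processes may de-silence to slightly different prefixes (one may have seen one more informative round than the other at a given global round), and one must check the patience-padding is consistent enough that the outputs still agree in the limit; I would handle this by arguing directly that, for each process separately, its de-silenced prefix along $\sigma$ is cofinally a genuine prefix of $\mu$ padded by silence, invoking \cref{lem:patience} to absorb the padding, and then using that $\Prot$ solves $T$ on the single pattern $\mu$ to pin down the common stable value.
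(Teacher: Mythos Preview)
Your argument has a genuine gap at step (2). The per-process ``de-silenced'' prefix $\pi_i$ you define --- the subsequence of rounds in which $p_i$ actually received a message --- is \emph{not} a prefix of the silence-free core $\mu$. The core $\mu$ drops only the $\tnone$ rounds, whereas your $\pi_\tp$ additionally drops every $\tpq$ round (since $\tp$ receives nothing there). Concretely, if $\sigma = \tpq \comp \tqp \comp \cdots$ then $\mu$ begins with $\tpq$ but $\pi_\tp$ begins with the round-2 graph $\tqp$; so $\pi_\tp$ is a proper subsequence of $\mu$, not a prefix, and the sentence ``for each process the de-silenced prefix it sees after round $m$ of $\sigma$ is a prefix of $\mu$'' is simply false. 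Both your stabilization and correctness arguments rest on reducing $\Prot'$ on $\sigma$ to $\Prot$ on the single pattern $\mu$, and that reduction is exactly what breaks: $\Prot'$ is instead tracking $\Prot$ along two \emph{different} limiting patterns $\sigma_\tp \in \{\tqp,\tpqp\}^\omega$ and $\sigma_\tq \in \{\tpq,\tpqp\}^\omega$, neither equal to $\mu$, with no a~priori reason the two stable values agree. Worse, the successive padded prefixes $\pi_i^{(r)}\comp\tnone^{k(\pi_i^{(r)})}$ are not nested and are not prefixes of any single $\DLL$ run, so you cannot even directly invoke $\Prot$'s stabilization to conclude that $\delta'_{p_i}$ stabilizes. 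The difficulty you flag in your last paragraph is therefore not a bookkeeping subtlety but the core obstruction.

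The paper's construction goes in the opposite direction: rather than \emph{removing} silence it \emph{inserts} it. From $\sigma=(G_i)$ it builds $\pi_k := G_1\comp\tnone^{\alpha_1}\comp G_2\comp\tnone^{\alpha_2}\comp\cdots\comp G_k\comp\tnone^{\alpha_k}$, choosing each $\alpha_j$ via \cref{lem:patience} so that every $\pi_k$ is patient, and sets $\delta'_x(\sigma|_1^k):=\delta_x(\pi_k)$. Now the $\pi_k$ \emph{are} nested prefixes of a single limit $\sigma'\in\DLL$ (a dilation of $\sigma$ with no infinite silent suffix), so $\Prot$ stabilizes correctly on $\sigma'$ and $\Prot'$ on $\sigma$ just reads this off. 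This avoids entirely the need to reconcile two per-process subsequences.
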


\begin{proof}
    Consider a communication pattern $\sigma = (G_i)_{i=1}^\infty \in \DLL$. We define inductively a prefix sequence $\pi_i$ in the following way: $\pi_1 := G_1 \comp \tnone^{\alpha_1}; \; \pi_{k+1} := \pi_k \comp G_{k+1} \comp \tnone^{\alpha_{k+1}}$, where $\alpha_{1} := patience(G_1); \; \alpha_{k+1} := patience(\pi_k \comp G_{k+1}).$ We define a communication pattern $\sigma' := (H_i)_{i=1}^\infty$, as the limit of $\pi_i$. 
    

    Consider a protocol $\mathcal{P}'$ with decision map $\delta'$ defined by: $\delta'_x(\sigma|_1^k)) = \delta_x(\pi_k)$. We will show that $\mathcal{P}'$ is a patient stabilizing protocol that solves $T$.
   
    By construction, each $\pi_i$ is a patient prefix of $\sigma'$, hence $\mathcal{P}'$ is patient. Note that $\sigma$ is an infinite sub-sequence of $\sigma'$ that does not include an infinite silence suffix $\tnone^{\omega}$, $\sigma'$ does not include an infinite silent suffix $\tnone^{\omega}$ either. Thus, $\sigma' \in DLL$. 

    Since $\mathcal{P}$ is a stabilizing protocol that solves $T$ in $\DLL$, in particular, $\delta$ provides a stabilizing solution for communication pattern $\sigma'$. Since $\delta(\config[0]:\sigma') = \delta'(\config[0]:\sigma)$, it follows that $\mathcal{P}'$ is a stabilizing protocol that solves $T$ for communication pattern $\sigma$.
    
    
    

\end{proof}

\begin{theorem}[Stabilizing consensus Impossibility in \DLL]\label{thm:stabagreeimp}
    Let $\Prot$ be an arbitrary stabilizing protocol in \DLL. There exists a valid communication pattern $\sigma\in\DLL$ such that $\Prot$ does not solve stabilizing consensus for $\sigma$.
\end{theorem}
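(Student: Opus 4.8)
The strategy is to work with a patient protocol, which we may assume exists by \cref{lem:patprot}. Once the protocol is patient, silent rounds $\tnone$ never change any process's decision, so the only way a process changes its output is upon receiving a message. I would start from the two ``extreme'' inputs $\conin_\tp < \conin_\tq$ and argue by contradiction that no patient protocol $\Prot'$ solves stabilizing consensus. The core idea mirrors the classical bivalency/indistinguishability argument for the Lossy-Link, but upgraded to produce a run that is \emph{conflicted infinitely often}: I would build, inductively, longer and longer prefixes $\pi_1 \sqsubseteq \pi_2 \sqsubseteq \cdots$, each ending in a round where the current output configuration is \emph{conflicted} (i.e.\ $\delta(\config[0]:\pi_i) \notin \task_{con}(\config[0])$), and each extending the previous one while staying in the \DLL; the limit $\sigma$ is then conflicted infinitely often, so by the remark after \cref{def:conflpref}, $\Prot'$ fails on $\sigma$, contradicting solvability.

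\textbf{Key steps.} First I would establish a ``switching'' lemma: given any prefix $\pi$ after which, say, $\tp$ currently decides $\conout$, consider extending by $\tqp^r$ (so only $\tp$ hears, repeatedly) versus $\tpq^r$ (so only $\tq$ hears). Because $\Prot'$ is patient and solves stabilizing consensus, along $\pi\comp\tqp^\omega$ the process $\tp$ must eventually stabilize — and by validity/agreement on that run, the stable value is forced (it is the value that propagates), and symmetrically on $\pi\comp\tpq^\omega$ the process $\tq$ stabilizes on a forced value. The point is that the two forced values can be made to disagree: feeding $\tp$ only messages that keep its ``knowledge'' pinned to $\conin_\tq$, while feeding $\tq$ only messages keeping it pinned to $\conin_\tp$, yields a round where $\delta_\tp \neq \delta_\tq$, i.e.\ a conflicted prefix. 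Second, from a conflicted prefix $\pi_i$ I would show how to extend it, through a finite stretch of \DLL-legal rounds, to a strictly longer conflicted prefix $\pi_{i+1}$: after the conflict round, let both processes communicate enough (a stretch of $\tpqp$'s) to re-synchronize if they would, then re-apply the switching lemma starting from this new longer prefix to create a fresh conflict. Patience guarantees we can also splice in silent rounds freely without disturbing decisions, which is what lets the construction live inside $\tnone^*\cdot\{\tpq,\tqp,\tpqp\}$ repeated forever. Third, I would take $\sigma$ to be the limit of the $\pi_i$, check $\sigma\in\DLL$ (it has no infinite silent suffix since each block contributes a non-silent graph, and each block is of the allowed form), and conclude that $\sigma$ is conflicted infinitely often, hence $\Prot'$ — and therefore the original $\Prot$ — does not solve stabilizing consensus on $\sigma$.

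\textbf{Main obstacle.} The delicate part is the switching lemma: I need to argue that the two ``forced'' stable values on the one-sided extensions $\pi\comp\tqp^\omega$ and $\pi\comp\tpq^\omega$ genuinely disagree, and more importantly that this disagreement is visible at a \emph{finite} round (a conflicted \emph{prefix}, not just in the limit). The subtlety is that a process might stabilize agnostically and might momentarily agree during the transient before the other side's forced value kicks in; I must choose the extension lengths $r$ large enough (past both stabilization rounds $r_1, r_2$ from the Patience Lemma analysis, relative to the \emph{current} prefix) so that both processes have already reached their respective forced — and distinct — stable values at the same round. Making this choice uniformly at every induction step, while keeping every intermediate prefix \DLL-admissible and ensuring the prefixes are nested, is the technical heart of the proof; once that machinery is in place, the limit argument and the reduction to \cref{lem:patprot} are routine.
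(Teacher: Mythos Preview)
Your overall architecture is right --- reduce to a patient protocol via \cref{lem:patprot}, then inductively build an infinite nested sequence of conflicted prefixes whose limit lies in \DLL --- and this is exactly the scaffolding the paper uses. The gap is in your ``switching lemma,'' which as stated does not produce a conflicted prefix.

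Concretely: from a conflicted $\pi$ with $\delta_\tp(\pi)=a\neq b=\delta_\tq(\pi)$, you look at $\pi\comp\tqp^\omega$ and $\pi\comp\tpq^\omega$. On $\pi\comp\tqp^\omega$, $\tq$ never receives, so by patience $\delta_\tq$ stays $b$ forever, and by agreement $\tp$ must eventually stabilize to $b$ as well; symmetrically on $\pi\comp\tpq^\omega$ both stabilize to $a$. So yes, the two limit values $a$ and $b$ disagree --- but they live on \emph{different runs}, and on each run individually the processes \emph{agree}. There is no single finite prefix here at which $\delta_\tp\neq\delta_\tq$ once you are past the stabilization rounds; your sentence ``so that both processes have already reached their respective forced --- and distinct --- stable values at the same round'' is exactly where the argument breaks, because the forced values are reached on distinct prefixes. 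Your fallback of inserting a stretch of $\tpqp$'s to ``re-synchronize'' makes this worse, not better: after enough $\tpqp$'s the processes typically agree, and you are back to needing the (still missing) mechanism to manufacture a fresh conflict.

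The paper sidesteps this entirely with a one-round indistinguishability chain. From a conflicted $\pi_k$, it asks whether any of the three single-step extensions $\pi_k\comp\tpq$, $\pi_k\comp\tpqp$, $\pi_k\comp\tqp$ is conflicted. If none is, then chaining the indistinguishabilities
\[
\delta_\tp(\pi_k)=\delta_\tp(\pi_k\comp\tpq)=\delta_\tq(\pi_k\comp\tpq)=\delta_\tq(\pi_k\comp\tpqp)=\delta_\tp(\pi_k\comp\tpqp)=\delta_\tp(\pi_k\comp\tqp)=\delta_\tq(\pi_k\comp\tqp)=\delta_\tq(\pi_k)
\]
(using patience at the two ends to identify $\pi_k\comp\tnone$ with $\pi_k$) forces $\delta_\tp(\pi_k)=\delta_\tq(\pi_k)$, contradicting that $\pi_k$ is conflicted. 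Thus one of the three extensions is conflicted, and you extend by that single non-silent graph. No long one-sided stretches, no appeal to stabilization on auxiliary runs, and the resulting limit is even silence-free. This chain is the missing idea in your proposal.
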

    
\begin{proof}
    Let us assume for a contradiction that there is a stabilizing protocol $\mathcal{P}$ that solves stabilizing consensus in \DLL. By \cref{lem:patprot}, there exists a patient protocol $\Prot'$ that solves stabilizing consensus in \DLL. We will provide an inductive construction of a run that is conflicted. More precisely, we define a sequence of conflicted prefixes $\pi_i$ such that $\pi_{i} \subset \pi_{i+1}$, and $\lim_{i \rightarrow \infty} \pi_i = \sigma \in \DLL$.

    \textbf{Base case (Empty prefix $\mathcal{\varepsilon}$):} For the base case, consider an initial input configuration $\config[0]$ with different values $\conin_{\tp} \neq \conin_{\tq}$. Due to the validity condition, each process must decide on its own value. Since $\conin_{\tp} \neq \conin_{\tq}$, $\config[0]:\varepsilon$ is a conflicted prefix.

    \textbf{Induction step (Prefix $\pi_k$):} Assume that $\config[0]:\pi_k$ is a conflicted prefix, i.e., $ \delta_{\tp} (\config[0]:\pi_k) \neq \delta_{\tq} (\config[0]:\pi_k ).$
    

    We assert that at least one of $\config[0] : \pi_k \comp \tpq$, $\config[0] : \pi_k \comp \tpqp$,  $\config[0] : \pi_k \comp \tqp$ is conflicted. Assuming that none is conflicted, we derive a contradiction (as illustrated in \cref{fig:dllimp}).

    Recall that $\Prot'$ is patient and $\config[0]: \pi_k \comp \tpq$ is indistinguishable from $\config[0]: \pi_k \comp \tnone$ for $\tp$. Therefore, $\delta_{\tp}(\config[0]: \pi_k \comp \tpq) = \delta_{\tp}(\config[0]: \pi_k \comp \tnone) = \delta_{\tp}(\config[0]: \pi_k)$. 

    Since we assumed that $\config[0]: \pi_k \comp \tpq$ is not conflicted, then\\
    $\delta_\tp (\config[0]: \pi_k \comp \tpq) = \delta_\tq (\config[0]: \pi_k \comp \tpq)$. 

    Note that $\config[0]: \pi_k \comp \tpq$ is indistinguishable from $\config[0]: \pi_k \comp \tpqp$ for $\tq$, and therefore $\delta_{\tq}(\config[0]: \pi_k \comp \tpq) = \delta_{\tq}(\config[0]: \pi_k \comp \tpqp)$.

    Similarly, since $\config[0]:\pi_k \comp \tpqp$ is not conflicted by assumption,\\
    $\delta_\tq (\config[0]:\pi_k \comp \tpqp) = \delta_\tp (\config[0]:\pi_k \comp \tpqp)$.

    Note that $\config[0]:\pi_k \comp \tpqp$ is indistinguishable from $\config[0]: \pi_k \comp \tqp$ for $\tp$ and thus,\\
    $\delta_\tp (\config[0]:\pi_k \comp \tpqp) = \delta_\tp(\config[0]: \pi_k \comp \tqp)$.

    Since $\config[0]:\pi_k \comp \tqp$ is not conflicted by assumption, $\delta_\tp(\config[0]: \pi_k \comp \tqp) = \delta_\tq(\config[0]: \pi_k \comp \tqp)$.

    Finally, since $\config[0]:\pi_k \tqp$ is indistinguishable from $\config[0]:\pi_k \comp \tnone$ for $\tq$ and $\Prot'$ is patient, $\delta_\tq(\config[0]: \pi_k \comp \tqp) = \delta_\tq(\config[0]: \pi_k \comp \tnone) = \delta_\tq(\config[0]: \pi_k)$.

    Therefore $\delta_\tp (\config[0] : \pi_k) = \delta_\tq(\config[0]:\pi_k)$, which contradicts the induction hypothesis, namely that $\config[0]:\pi_k$ is conflicted. Thus, either one of $\config[0]:\pi \comp \tpq$, $\config[0]:\pi \comp \tpqp$, or $\config[0]:\pi \comp \tqp$ is a conflicted prefix.
    
    Choosing $G_{k+1} \in \{ \tpq, \tpqp, \tqp \}$ such that $\config[0]:\pi_{k} \comp G_{k+1}$ is conflicted, we define $\pi_{k+1} := \pi_k \comp G_{k+1}$. Setting $\sigma = \lim_{k\rightarrow\infty}\pi_k$, note that $\sigma \in \DLL$ by construction and each $\pi_k$ is a conflicted prefix: Therefore $\sigma$ is indeed conflicted infinitely often, and $\Prot'$ does not solve stabilizing consensus.
 
    \begin{figure}[ht]
        \begin{tikzpicture}
            \pgfmathsetmacro{\x}{0}
            \pgfmathsetmacro{\xstep}{2.6}
            
            \node at (\x,  0) (r1) {$\delta_{\tp}(\pi_k \comp \tnone)$};
            \pgfmathsetmacro{\x}{\xstep+\x}
            \node at (\x, 0) (r2) {$\delta_{\tp}(\pi_k \comp \tpq)$};
            \node at (\x, -1) (r3) {$\delta_{\tq}(\pi_k \comp \tpq)$};
            \pgfmathsetmacro{\x}{\xstep+\x}
            \node at (\x, -1) (r4) {$\delta_{\tq}(\pi_k \comp \tpqp)$};
            \node at (\x, -2) (r5) {$\delta_{\tp}(\pi_k \comp \tpqp)$};
            \pgfmathsetmacro{\x}{\xstep+\x}
            \node at (\x, -2) (r6) {$\delta_{\tp}(\pi_k \comp \tqp)$};
            \node at (\x, -3) (r7) {$\delta_{\tq}(\pi_k \comp \tpq)$};
            \pgfmathsetmacro{\x}{\xstep+\x}
            \node at (\x, -3) (r8) {$\delta_{\tq}(\pi_k \comp \tnone)$};
        
            \foreach \i in {1, 3, 5, 7} {
                \pgfmathtruncatemacro{\j}{\i+1}
                \draw[draw=none] (r\i.east) -- (r\j.west)
                    node[midway,circle,inner sep=0pt] (edge{\i}{\j}eq) {$=$};
            }
    
            \pgfmathsetmacro{\x}{\x+1.5}
            \draw[->] (-1.5,0) -- (r1);
            \draw (-1.5,0) -- (-1.5,-3.5) -- (\x,-3.5)
                node[midway,draw,circle,inner sep=2pt,fill=white] (edgeneq) {$\neq$} -- (\x,-3);
            \draw[->] (\x,-3) -- (r8);
    
            \begin{scope}[on background layer]
                \node[fit=(r2) (r3),fill=lightestgray,rounded corners,inner sep=-1pt] {};
                \node[fit=(r4) (r5),fill=lightestgray,rounded corners,inner sep=-1pt] {};
                \node[fit=(r6) (r7),fill=lightestgray,rounded corners,inner sep=-1pt] {};
            \end{scope}
        \end{tikzpicture}
        \caption{An illustration of the induction step in \cref{thm:stabagreeimp}. Decision values grouped by a grey rectangle correspond to decision values of the same prefix, i.e., they should match otherwise we have a conflicted prefix. Decision values connected via an equality sign are necessarily identical because the process cannot distinguish the two prefixes. This creates a chain of equalities that is broken by our induction hypothesis, namely that $\pi_k \comp \tnone$ is conflicted, implying one of the grey boxes, and consequentally a longer prefix, must be conflicted as well.}
        \label{fig:dllimp}
    \end{figure}
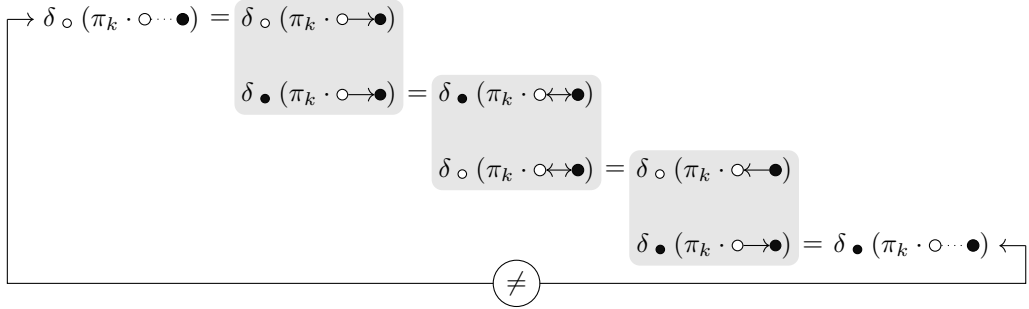
\end{proof}

\section{The Iterated Immediate Snapshot and the Lossy Iterated Immediate Snapshot Models}
\label{sec:iisdiis}

Previously, we introduced the Lossy-Link and Delayed Lossy-Link message adversary as a starting point for our stabilizing consensus impossibility result. This is for two main reasons: first, the simplicity of the Lossy-Link highlights the source of the impossibility result without adding distracting details to the proofs, and second, the \DLL impossibility translates transparently to other message adversaries that generalize \DLL.

One such message adversaries is the \emph{Iterated Immediate Snapshot} message adversary, which is computationally equivalent to the asynchronous Iterated-Immediate Snapshot model defined as an $n$-process asynchronous shared memory model where processes communicate via atomic snapshots. The first approach towards this model was introduced by Borowski and Gafni~\cite{BG93}, who presented the \emph{Immediate-atomic-snapshot model} as a generalization of the classical FLP model~\cite{FLP/ACM}. They also re-used it as a convenient model for characterizing wait-free asynchronous shared memory systems~\cite{BG97}. 

In the \IIS model, one considers a set $\Pi = \{ p_i\}_{i=1}^{n}$ of processes that communicate through a shared memory object $M$. Each process may write into its own register $M[i]$ and is able to read the whole memory $M$ in an atomic snapshot operation. A process $p_i$ that executes an atomic snapshot writes into its register $M[i]$ and reads all of $M$ instantaneously. Since snapshot operations executed by different processes may occur concurrently, it is assumed that every process in a concurrent set is capable of reading each others output. Although atomic snapshot objects appear to be very strong, it has been shown that they are equivalent to single-reader single-writer wait-free shared registers~\cite{HKR14}. 

Perhaps one of its most useful features of the \IIS model is that it allows us to express an asynchronous shared memory model in terms of a synchronous message adversary~\cite{AG13}, an asynchronous snapshot by all processes, can hence be represented by a communication graph. If a process $\p$ and a process $\q$ execute a snapshot operation, there are $3$ possibilities, either $\p$ executed its snapshot before $\q$, $\q$ before $\p$, or both snapshots are concurrent. If $\p$ executes its snapshot before $\q$, then $\p$ cannot read what $\q$ wrote, but $\q$ will be able to read the value from $\p$. Reciprocally, if $\q$ executes its snapshot before $\p$, $\q$ will see $\p$'s value but not vice versa. If $\p$ and $\q$ have concurrent snapshots, then both are able to read each other's value. Since we consider time to be linearly ordered, and snapshots occur within the same time frame (round), then this means that for any pair of processes $\p$ and $\q$, either $\p$ reads $\q$ or $\q$ reads $\p$, or both. When translated to communication graphs, this implies that each communication graph in the \IIS model is \emph{semi-complete} \footnote{A directed graph is semi-complete if for any pair of vertices $v,w$, either $(v,w) \in E(G)$ or $(w,v) \in E(G)$.}. Since the time at  which each process executed its snapshot determines completely the memory registers that each process is able to read, every communication graph in the \IIS model is also \emph{transitive} \footnote{A directed graph $G$ is transitive iff $(v,w) (w,z) \in E(G) \Rightarrow (v,z) \in E(G)$.}. 

The converse can also be shown, such that for every communication pattern $\sigma$ of semi-complete and transitive graphs, there is an \IIS schedule that matches $\sigma$. Thus, for the rest of this paper, we will simply consider that $\IIS := \mathcal{G}^\omega$, where $\mathcal{G}$ is the set of semi-complete and transitive graphs on $\Pi$. Note carefully that \LL satisfies both transitivity and semi-completeness.

\subsection{The Lossy Iterated Immediate Snapshot Model}

Having introduced the \IIS model as an $n$-process generalization of the Lossy-Link model, we will now add read-omission faults to the immediate snapshot operations. This means that, whenever processes execute a snapshot in a round, there may be at most $f$ total read-omission failures of other processes' registers.

Note that the empty graph $\tnone \in \DLL$, can be viewed as a read-omission in either $\tpq$ or $\tqp$. However, for $n \geq 3$, faulty snapshots are slightly more complex than simply considering empty graphs. Consequently, we define the set of \emph{lossy snapshot graphs}, $\Phi(f)$ as the set of directed graphs from $\mathcal{G}$ (consisting of semi-complete and transitive directed graphs on $\Pi$) with at least one missing edge and at most $f\leq n (n-1)$ missing edges. More precisely, $\Phi(f) := \{ G \setminus F \; \vert \;  G \in \mathcal{G}, F \subseteq E(\mathcal{G}),1 \leq \vert F \vert \leq f\}$.

Similarly to \cref{sec:lldll}, we start by limiting the number of consecutive iterations where a snapshot might be faulty. We denote the \emph{Bounded Lossy Iterated Immediate Snapshot} model by $\BLIIS(f,k)$, where $f \leq n(n-1)$ is the maximum number of omission faults per iteration, and $k$ is the maximum number of consecutive rounds an immediate snapshot might be faulty. We then relax this restriction by requiring that correct snapshot iterations must only happen infinitely often (similar to the \DLL). We call this model the \emph{Lossy Iterated Immediate Snapshot} model, denoted by $\LIIS(f)$. Note that for $2$ processes, $\BLIIS_{\{\tp, \tq\}}(1,k)$ corresponds to $\BDLL(k)$, while $\LIIS_{\{\tp, \tq\}}(1)$ corresponds to \DLL.
 




These message adversaries can be formally defined as follows:

\begin{definition}[\IIS, \BLIIS, \LIIS Message Adversaries]\label{def:iisbdiisdiis}
    Let $\mathcal{G}$ be the set of semi-complete and transitive graphs on $\Pi$, and $\Phi(f)$ the set of faulty snapshot graphs for $f \leq n(n-1)$.

    \begin{itemize}
        \item $\IIS := \mathcal{G}^{\omega}$
        \item $\BLIIS(f,k) := (\{ \Phi(f)^i \}_{i=0}^k \comp \IIS) ^\omega$
        \item $\BLIIS(f) := \bigcup_{k \in \mathbb{N}} \BLIIS(f,k)$
        \item $\LIIS(f) := (\Phi(f)^* \comp \IIS)^\omega$, where $\Phi(f)^*$ is the Kleene star of $\{ \Phi(f) \}$.
    \end{itemize}
\end{definition}

First, we prove that all presented message adversaries have a non-empty kernel, and hence satisfy (i).

\begin{lemma}\label{lem:bliisnek}
    \IIS, $\BLIIS(f, k)$, $\BLIIS(f)$ and $\LIIS(f)$ all have a non empty kernel, i.e.,  for any $\sigma$, we have $Ker(\sigma) \neq \emptyset$.
\end{lemma}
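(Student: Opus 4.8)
The plan is to show that every pattern $\sigma$ in each of the four adversaries admits a single process that is an all-dominating source in infinitely many rounds, and that any such process must lie in $Ker(\sigma)$. I first record two routine facts about the subview relation $\prec$: it is transitive (being the transitive closure of a one-step relation), and, because a process always belongs to its own in-neighbourhood, $view_\sigma(p,r)\prec view_\sigma(p,s)$ whenever $r<s$. The structural ingredient is that every $G\in\mathcal{G}$ has a vertex $b$ with $(b,q)\in E(G)$ for all $q\in\Pi$: the condensation of $G$ is acyclic, and it is also a tournament (two antiparallel condensation edges would, via transitivity of $G$, collapse the two components into one), hence a finite total order on the strongly connected components of $G$; taking $L$ to be its source component and $b\in L$ arbitrary, transitivity of $G$ promotes every condensation edge leaving $L$ into an edge of $G$ out of $b$, and strong connectivity of $L$ supplies the edges from $b$ to the remaining vertices of $L$. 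Call such a $b$ a \emph{round source}, and call a round $i$ \emph{correct} when $G_i\in\mathcal{G}$; if round $i$ is correct then $b_i\in In_{G_i}(q)$ for every $q$, so $view_\sigma(b_i,i-1)\prec view_\sigma(q,i)$.

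Next I would check that every pattern of $\IIS$, $\BLIIS(f,k)$, $\BLIIS(f)$ and $\LIIS(f)$ has infinitely many correct rounds. For $\IIS=\mathcal{G}^{\omega}$ this is immediate. For the other three, by \cref{def:iisbdiisdiis} the pattern decomposes into infinitely many consecutive blocks, and each block is a finite (for $\BLIIS$, boundedly long) run of $\Phi(f)$-graphs followed by a graph from $\mathcal{G}$; that trailing $\mathcal{G}$-graph is a correct round, so correct rounds recur infinitely often. Since $\Pi$ is finite, the pigeonhole principle yields a process $p$ and an infinite set $R$ of correct rounds with $b_i=p$ for every $i\in R$.

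It then remains to conclude that $p\in Ker(\sigma)$. Fix $r>0$ and pick $i\in R$ with $i>r$. For every $q\in\Pi$ we have $view_\sigma(p,r)\prec view_\sigma(p,i-1)$ when $i-1>r$, and $view_\sigma(p,r)=view_\sigma(p,i-1)$ when $i-1=r$; combined with $view_\sigma(p,i-1)\prec view_\sigma(q,i)$ from the previous step and transitivity of $\prec$, this gives $view_\sigma(p,r)\prec view_\sigma(q,i)$ with witness $r'=i>r$. Hence $p\in Ker(\sigma)$, so $Ker(\sigma)\neq\emptyset$ for each of the four adversaries.

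The only genuinely non-trivial step is the structural claim that a finite semi-complete transitive digraph always has an all-dominating vertex; once that is in hand, the rest is bookkeeping with $\prec$ and a pigeonhole argument. A secondary point to state carefully is that every block appearing in the patterns of \cref{def:iisbdiisdiis} really does contain a round from $\mathcal{G}$, which is what makes correct rounds — and therefore broadcasts by a fixed process — occur infinitely often.
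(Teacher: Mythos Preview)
Your proposal is correct and follows essentially the same approach as the paper: both arguments use that every semi-complete transitive graph has a dominating vertex, that each of the four adversaries produces infinitely many such graphs, and then apply the pigeonhole principle on $\Pi$ to extract a perpetual broadcaster. Your version is more explicit in justifying the dominating-vertex claim via the condensation and in verifying the kernel condition through the subview relation, whereas the paper compresses the non-$\IIS$ cases by passing to the $\IIS$ sub-sequence and invoking the $\IIS$ case; but the underlying ideas are the same.
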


\begin{proof}
    As every communication graph in $\sigma\in\IIS$ is semi-complete and transitive, $\sigma(r)$ contains at least one process that directly reaches all other processes\footnote{It is a well known result that transitive tournaments are equivalent to strict total order graphs, and hence finite transitive tournaments always contain a dominating vertex. Thus, any finite semi-complete and transitive graph also includes a dominating vertex. See for example~\cite{graphsanddigraphs} for more results on tournaments.}. As there are only finitely many processes, the pigeonhole principle guarantees that in the infinite sequence $\sigma$ at least one process reaches all others infinitely often, implying $Ker(\sigma)\neq\emptyset$.

    Now assume $\sigma = (G_i)_{i=1}^\infty$ to be an admissible run in either $\BLIIS(f, k)$, $\BLIIS(f)$ or $\LIIS(f)$. By construction, in any of these models, there is an infinite number of instances of graphs $G \in \IIS$ in $\sigma$, i.e., $(G_{s(i)})_{i=1}^\infty \sqsubseteq \sigma$ and $(G_{s(i)})_{i=1}^\infty \in \IIS$. But this is a sub-sequence with a non-empty kernel as proven above, therefore $Ker(\sigma)\neq\emptyset$ holds.
\end{proof}

We will now argue that $\IIS$, $\BLIIS(f,k)$ and $\BLIIS(f)$ allow solving stabilizing consensus. We start with the following technical lemma:

\begin{lemma}
    Let $\sigma$ be any communication pattern. There exists a round $s_i$ such that no process learns any new input values anymore. Formally,\\
    $\forall p_i\in\Pi : \forall r > s_i : \leafs(view_\sigma(p_i,r)) = \leafs(view_\sigma(p_i,s_i))$. We call this set $p_i$'s \emph{stable value set}.
\end{lemma}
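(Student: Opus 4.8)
The plan is to show that each process's set of known input values is monotonically nondecreasing in $r$ and bounded, hence eventually constant. First I would observe that for any communication pattern $\sigma$ (not just in the snapshot models) and any process $p_i$, the view at round $r$ always contains $p_i$'s own previous view, since $p_i \in In_{\sigma_r}(p_i)$ by convention; consequently $view_\sigma(p_i,r-1) \prec view_\sigma(p_i,r)$, and therefore $\leafs(view_\sigma(p_i,r-1)) \subseteq \leafs(view_\sigma(p_i,r))$. This gives, for every fixed $p_i$, an increasing chain of subsets of the finite input value set $I(\mathcal{I})$.

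Next I would invoke finiteness: since $\leafs(view_\sigma(p_i,r)) \subseteq I$ and $I$ is finite, the increasing chain $(\leafs(view_\sigma(p_i,r)))_{r\geq 0}$ can strictly increase at most $|I|$ times, so there exists a round $s_i$ (the first round after which no new leaf ever appears) with $\leafs(view_\sigma(p_i,r)) = \leafs(view_\sigma(p_i,s_i))$ for all $r > s_i$. To be careful about the phrasing ``no process learns any new input values anymore'', I would take $s := \max_i s_i$ if a single uniform round is wanted, but the statement as written quantifies $s_i$ per process, so the per-process argument suffices; I would just note that $s_i$ depends only on $\sigma$ and $\config[0]$.

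The monotonicity step is essentially the only content, and it is not really an obstacle — it follows directly from the inductive definition of $\prec$ and the stipulation that $In_G(p_i)$ always includes $p_i$. The one place to be mildly careful is the base case $r=1$: here $V(\config[0]:\sigma,p_i,1)$ contains $view_\sigma(p_j,0) = \langle\{\conin_j\},p_j,0\rangle$ for each $p_j \in In_{\sigma_1}(p_i)$, which always includes $p_i$ itself, so $\conin_i \in \leafs(view_\sigma(p_i,1))$ and more generally $\leafs(view_\sigma(p_i,0)) = \{\conin_i\} \subseteq \leafs(view_\sigma(p_i,1))$. I expect the proof to be three or four lines: state monotonicity via $p_i$'s self-loop, conclude the chain stabilizes by finiteness of $I$, and name the stabilization round $s_i$.
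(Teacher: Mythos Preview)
Your proposal is correct and is essentially the paper's argument spelled out in detail: the paper's proof is a one-liner noting that input values are never forgotten and that $\Pi$ is finite, so $\leafs(view_\sigma(p_i,r))$ can strictly grow only finitely often. You justify the ``never forgotten'' part explicitly via the self-loop $p_i\in In_{\sigma_r}(p_i)$ and bound the chain by $|I|$ instead of $|\Pi|$, but the structure is identical.
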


\begin{proof}
    As the set of processes $\Pi$ is finite and input values are never forgotten in a run, the set $\leafs(view_\sigma(p_i,r))$ can be increased only finitely often.
\end{proof}

\begin{lemma}\label{lem:kerstabval}
    Let $\sigma$ be any communication pattern. After round $s_i$, $p_i$'s stable value set contains the stable value set of all processes in $Ker(\sigma)$. Formally, $ \forall p_i \in Ker(\sigma)$, $\forall p_i\in\Pi: \leafs(views_\sigma(p_j,s_j)) \subseteq \leafs(views_\sigma(p_i,s_i))$.
\end{lemma}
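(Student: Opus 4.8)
The plan is to combine the definition of $Ker(\sigma)$ with the fact that $\leafs$ is monotone along the subview relation, using the preceding lemma (on stable value sets) to pin everything down at large enough rounds. Fix $p_j\in Ker(\sigma)$ and an arbitrary $p_i\in\Pi$, and let $s_j$ and $s_i$ be their respective stabilization rounds supplied by the previous lemma, so that $\leafs(view_\sigma(p_j,r))=\leafs(view_\sigma(p_j,s_j))$ for all $r\geq s_j$ and similarly for $p_i$. Put $r:=\max\{s_i,s_j\}$. Since $p_j\in Ker(\sigma)$, instantiating the kernel property at this $r$ produces a round $r'>r$ with $view_\sigma(p_j,r)\prec view_\sigma(p_i,r')$.

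The first key step is to establish that $\leafs$ is monotone along $\prec$: whenever $u\prec w$, then $\leafs(u)\subseteq\leafs(w)$. This holds because $\prec$ is transitive — its inductive definition is precisely the transitive closure of the direct-predecessor relation on views, a fact one checks by a short induction on the round number — so any leaf $\langle\{\conin_k\},p_k,0\rangle\prec u$ of $u$ also satisfies $\langle\{\conin_k\},p_k,0\rangle\prec w$ and is therefore a leaf of $w$. Applying this to $view_\sigma(p_j,r)\prec view_\sigma(p_i,r')$ yields $\leafs(view_\sigma(p_j,r))\subseteq\leafs(view_\sigma(p_i,r'))$.

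Finally I would close the argument by invoking the previous lemma twice: since $r\geq s_j$ we have $\leafs(view_\sigma(p_j,r))=\leafs(view_\sigma(p_j,s_j))$, and since $r'>r\geq s_i$ we have $\leafs(view_\sigma(p_i,r'))=\leafs(view_\sigma(p_i,s_i))$. Chaining these equalities with the inclusion above gives $\leafs(view_\sigma(p_j,s_j))\subseteq\leafs(view_\sigma(p_i,s_i))$, which is exactly the claim, and since $p_i$ was arbitrary it holds for every $p_i\in\Pi$.

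The main obstacle is a minor one: making the monotonicity step rigorous, i.e. verifying that the inductively defined $\prec$ really is transitive so that leaves propagate upward through views, and being careful to instantiate the kernel property at a round that is simultaneously $\geq s_j$ (so $p_j$'s leaf set is already stable) and produces $r'\geq s_i$ (so $p_i$'s leaf set is already stable). Once those two points are handled, the rest is pure bookkeeping.
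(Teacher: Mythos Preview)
Your proposal is correct and shares the same underlying idea as the paper's proof: a kernel process's known inputs must eventually propagate to every process, so once leaf sets have stabilized they must contain the kernel process's stable set. The paper states this as a one-line contradiction (``if some value known to $p_j$ never reaches $p_i$, then $p_j$ does not reach all processes infinitely often, contradicting $p_j\in Ker(\sigma)$''), whereas you unpack the same reasoning as a direct argument, explicitly verifying the monotonicity of $\leafs$ under $\prec$ and carefully choosing the round $r=\max\{s_i,s_j\}$ so that both stabilities can be invoked; your version is more detailed but not a genuinely different route.
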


\begin{proof}
    By contradiction. Assuming the contrary implies that $p_i$ never receives a message from $p_j \in Ker(\sigma) $that reveals some input value known to $p_j$. This contradicts the assumption that $p_j$ is in the kernel.
\end{proof}

\begin{corollary}\label{cor:keridstabval}
    For any communication pattern $\sigma$ with $Ker(\sigma)\neq\emptyset$, processes in the kernel have identical stable value sets.
\end{corollary}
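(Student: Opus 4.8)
The plan is to derive this directly from \cref{lem:kerstabval} by applying it symmetrically to any two kernel processes. Fix a communication pattern $\sigma$ with $Ker(\sigma) \neq \emptyset$, and let $p_i, p_j \in Ker(\sigma)$ be arbitrary. I first invoke \cref{lem:kerstabval} with $p_j$ in the role of the kernel process and $p_i$ in the role of the arbitrary process of $\Pi$: this gives $\leafs(view_\sigma(p_j, s_j)) \subseteq \leafs(view_\sigma(p_i, s_i))$. Then I invoke the same lemma again with the roles of $p_i$ and $p_j$ exchanged — which is legitimate precisely because \emph{both} are in $Ker(\sigma)$ — to obtain the reverse inclusion $\leafs(view_\sigma(p_i, s_i)) \subseteq \leafs(view_\sigma(p_j, s_j))$.

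Combining the two inclusions yields $\leafs(view_\sigma(p_i, s_i)) = \leafs(view_\sigma(p_j, s_j))$, i.e., $p_i$ and $p_j$ have identical stable value sets. Since $p_i, p_j$ were arbitrary elements of $Ker(\sigma)$, this establishes the corollary.

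There is essentially no obstacle here beyond bookkeeping: the only point that deserves a word of care is that \cref{lem:kerstabval} has an asymmetric form (a kernel process's stable value set is contained in \emph{every} process's stable value set), so the equality is obtained only by using that both processes lie in the kernel and hence that the lemma applies in both directions simultaneously. No properties of $\sigma$ beyond $Ker(\sigma) \neq \emptyset$ are needed, and the existence of the stabilization rounds $s_i, s_j$ is guaranteed by the preceding lemma on stable value sets.
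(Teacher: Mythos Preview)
Your proof is correct and is exactly the argument the paper intends: the corollary is stated without proof precisely because it follows immediately from \cref{lem:kerstabval} by the symmetric double application you describe. There is nothing to add.
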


\begin{theorem}
    For any $\sigma \in \IIS \cup \BLIIS(f,k) \cup \BLIIS(f)$, 
    \begin{equation*}\label{eq:minmaxk}
      \delta_{MinMax}^{r/2}(view_{\sigma}(p_i,r)) :=
        \underset{x \in V^{r/2}(\config[0]:\sigma,p_i, r)}{\max} \{ \min (\leafs(x)) \}
    \end{equation*}
    solves stabilizing consensus in $\sigma$. Note that the number of past rounds considered (i.e, $\frac{r}{2}$) now depends on the current round $r$.
\end{theorem}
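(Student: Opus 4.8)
The plan is to prove that on any such $\sigma$ the values produced by $\delta_{MinMax}^{r/2}$ eventually become constant, identical across all processes, and valid. First fix $\sigma$ and note that $\sigma\in\BLIIS(f,k)$ for some $k\in\N$ ($k=0$ if $\sigma\in\IIS$, and the witnessing $k$ if $\sigma\in\BLIIS(f)$); consequently clean $\IIS$ graphs recur along $\sigma$ with gap at most $k+1$. By \cref{lem:bliisnek} we have $Ker(\sigma)\neq\emptyset$, and by \cref{cor:keridstabval} all kernel processes share one stable value set, which I call $K$; put $m:=\min K$. Using the lemma on stable value sets, let $S:=\max_i s_i<\infty$ be a round past which no $\leafs(view_\sigma(p_j,\cdot))$ changes, write $\Lambda_j$ for $p_j$'s stable value set, and recall that $K\subseteq\Lambda_j$ for every $j$ by \cref{lem:kerstabval}. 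The goal is to show $\delta_{MinMax}^{r/2}(view_\sigma(p_i,r))=m$ for every $i$ and every sufficiently large $r$.

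\emph{Upper bound.} For $r$ large enough that the window $V^{r/2}(view_\sigma(p_i,r))$ only collects views from rounds $\ge S$ (which holds once $r>2S+2$, since that window reaches back only to round $\approx r/2$), every $x\in V^{r/2}(view_\sigma(p_i,r))$ is of the form $view_\sigma(p_j,r')$ with $r'\ge s_j$, hence $\leafs(x)=\Lambda_j\supseteq K$ and $\min(\leafs(x))\le\min K=m$. Taking the maximum over the window gives $\delta_{MinMax}^{r/2}(view_\sigma(p_i,r))\le m$.

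\emph{Lower bound.} Here the growing window is essential. For $r$ large, the set of round indices on which $V^{r/2}(view_\sigma(p_i,r))$ draws is an interval of length $\ge k+1$ lying entirely above $S$, so it contains a round $t>S$ whose graph is a clean semi-complete transitive graph. Such a graph has a dominating vertex (as recalled in the proof of \cref{lem:bliisnek}) that every vertex reads from; hence that source's view is a direct in-neighbour view of \emph{every} process at the following step, and in particular occurs in $V^{r/2}(view_\sigma(p_i,r))$. Being a source, that same view is also absorbed into every kernel process's view at that step, and since $t>S$ the kernel view has leaf set exactly $K$; therefore the source's leaf set is contained in $K$, so its minimum is $\ge m$, and $\delta_{MinMax}^{r/2}(view_\sigma(p_i,r))\ge m$.

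Combining the two bounds, there is a round $R_0$ (of order $\max(S,k)$) past which $\delta_{MinMax}^{r/2}(view_\sigma(p_i,r))=m$ for every $p_i$: this is at once stabilization (the output is constant from $R_0$ on) and stable agreement (all processes output $m$), while validity is immediate since $m\in K$ and $K$ consists only of input values. Hence $\delta_{MinMax}^{r/2}$ solves stabilizing consensus on $\sigma$. The step I expect to be the main obstacle is the lower bound — in particular the observation that a source vertex of a clean round can never know a value below $m$ (else a kernel process would learn it, contradicting $m=\min K$) — together with the round-index bookkeeping needed to guarantee that a recurring clean round actually lands inside the $r/2$-window rather than just outside it. This is exactly where "$r/2\to\infty$ while $k$ stays fixed" is used, mirroring the passage from $\delta_{MinMax}^k$ on $\BDLL(k)$ to a growing window on $\BDLL$.
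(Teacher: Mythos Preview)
Your proof is correct and follows essentially the same route as the paper's: reduce to a fixed gap bound $k$, appeal to the kernel's common stable value set $K$ with minimum $m$, and use that the growing $r/2$-window eventually both lies past the global stabilization round and is wide enough to catch a clean round. Your lower-bound step --- observing that the dominating vertex of a clean round has leaf set contained in $K$, since kernel processes absorb that view without enlarging their stable set past $K$ --- is in fact sharper than the paper's looser assertion that ``all processes not in the kernel receive the view of a process $p_j\in Ker(\sigma)$'' in every clean round, which is not literally true of an arbitrary clean graph but is rescued precisely by your observation; the paper's separate treatment of $\IIS$, $\BLIIS(f,k)$, $\BLIIS(f)$ is subsumed by your unified argument.
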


\begin{proof}
    We set $s_g = \max\{s_1, \dots, s_n\}$, i.e., the round after which all processes have arrived at their stable value sets. From \cref{lem:kerstabval}, it follows that for all rounds $r > s_g$,\\ $\forall p_j\in Ker(\sigma) : \forall p_i \in \Pi: \min(\leafs(view_\sigma(p_j,r))) \geq \min(\leafs(view_\sigma(p_i,r)))$. From \cref{cor:keridstabval}, we infer that processes in the kernel have identical minima.

    \begin{itemize}
        \item Assume $\sigma \in \IIS$, and $r-1 > s_g$. As any $\sigma(r)$ is semi-complete and transitive, all processes not in the kernel receive the view of a process $p_j \in Ker(\sigma)$. By the above reasoning, $p_j$'s minimum is larger than all other minima, therefore the decision function $ \delta_{MinMax}^{r/2}$ chooses correctly from round $s_g+1$ on.

        \item Assume $\sigma \in \BLIIS(f,k)$, and $r-k > s_g$. As $\sigma(r)$ is semi-complete and transitive for at least one graph among the ones for $r \in \{ r-k, \dots, r \}$, we can repeat the previous reasoning but applied to views received in the previous $k$ rounds, $V^k(\config[0]:\sigma,p_i,r)$. Eventually, i.e., after round $s_j+k$, we are sure that the interval $r-k$ to $r$ contains a view with the stable value set of a process $p_i$ in the kernel. Note that as $\IIS \subseteq \BLIIS(f,k)$, this also solves stabilizing consensus on the \IIS.

        \item Assume $\sigma \in \BLIIS(f)$, and $r > s_g$. As $\sigma$ does not have a predefined bound $k$, we cannot resort to the previous reasoning. However, we know there \emph{is} a bound, as $\sigma\in\BLIIS(f)$ implies $\sigma\in\BLIIS(f,k)$ for some $k$. As any $\delta^k$ solves stabilizing consensus also on any $\sigma\in\BLIIS(f,l)$ for $l<k$ by inclusion, we can dynamically consider the previous $\frac{r}{2}$ rounds. This ensures that (a) we eventually surpass the fixed bound $k$, and thus eventually always consider the view of a process in the kernel, and (b) eventually $r - \frac{r}{2}$ also exceeds $s_j$ and we only consider stable value sets of $p_j\in Ker(\sigma)$. Together with the fact that, $p_j$'s stable value set contains the largest minimum among all processes, we conclude that $\delta^{r/2}$ solves stabilizing consensus on $\sigma\in \IIS \cup \BLIIS(f,k) \cup \BLIIS(f)$.
    \end{itemize}
\end{proof}

As the main result of this section, we will now prove that stabilizing consensus is impossible in $\LIIS(1)$, i.e., in the presence of just one faulty register per round finitely consecutive rounds. We extend the proof of \cref{thm:stabagreeimp} from \DLL to $\LIIS(1)$ by identifying four admissible graphs in  $\LIIS(1)$ that ``replicate'' the \DLL impossibility result in the \LIIS model.

Note that $\IIS = LIIS(0) \subseteq \LIIS(1) \subseteq \ldots \subseteq \LIIS(f) \subseteq \LIIS(f+1)$, thus if stabilizing consensus is impossible in $\LIIS(1)$, then it is also impossible in $\LIIS(f)$ for any $f \geq 1$. 

For the remainder of this paper, we will consider a system with at least $n\geq 2$ processes, and focus on two distinct processes $p_1$ and $p_2$. For convenience, we assume that $\tp\mapsto p_1$, i.e., as the choice of index is free, we rename $\tp$ in $\DLL$ to $p_1$ in the context of the $\LIIS(1)$, and symmetrically $\tq\mapsto p_2$. We use the following convenient notations: $\Tilde{\Pi} := \Pi \setminus \{ p_1, p_2 \}$; $K_{\Tilde{\Pi}}$, is the complete graph on $\Tilde{\Pi}$; $\G_1 \oplus \G_2$ represents the graph defined by $V(\G_1 \oplus \G_2) := V(\G_1) \cup V(\G_2)$ , and $E(\G_1 \oplus \G_2) := E(\G_1) \cup E(\G_2) \cup V(\G_1) \times V(\G_2)$, i.e., adding all possible edges from $G_1$ to $G_2$ but not the other way round, and also extend it to runs $\sigma \oplus K_{\Tilde{\Pi}} := (G_i \oplus K_{\Tilde{\Pi}})_{i=1}^\infty$. 

We define the following graphs: $\Gpq := (p_1\pq p_2) \oplus K_{\Tilde{\Pi}}$; $\Gpqp := (p_1 \pqp p_2) \oplus K_{\Tilde{\Pi}}$; $\Gqp := (p_1 \qp p_2) \oplus K_{\Tilde{\Pi}}$ ; and $\Gnone := (p_1 \none p_2) \oplus K_{\Tilde{\Pi}}$. Note that \Gpq, \Gpqp and \Gqp are semi-complete and transitive and therefore snapshot graphs in the \IIS. Likewise, $\Gnone \in  \Phi(1)$, since it can be obtained by removing $(p_1,p_2)$ from \Gpq or by removing $(p_2,p_1)$ from \Gqp.

We show that any sequence in the \DLL can be extended to a sequence in the $\LIIS(1)$ where $p_1$ and $p_2$ have identical views, implying that a protocol solving stabilizing consensus in the $\LIIS(1)$ necessarily solves it in the \DLL, deriving a contradiction.

\begin{lemma}\label{lem:dllembedliis}
    Any $\sigma \in \DLL$ can be extended to a $\sigma'\in\LIIS(1)$, s.t., $Views(\sigma) = Views_{\{p_1, p_2\}}(\sigma')$.
\end{lemma}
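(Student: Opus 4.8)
The goal is to turn an arbitrary $\sigma \in \DLL$ — a sequence over $\{\tpq,\tqp,\tpqp,\tnone\}$ (recalling $\tnone$ may only appear in finite runs between non-silent graphs) — into a run $\sigma' \in \LIIS(1)$ on the full process set $\Pi$ such that the views of $p_1$ and $p_2$ in $\sigma'$ coincide exactly with their views in $\sigma$. The natural candidate is the obvious graph-wise substitution: replace $\tpq \mapsto \Gpq$, $\tqp \mapsto \Gqp$, $\tpqp \mapsto \Gpqp$, and $\tnone \mapsto \Gnone$, i.e. $\sigma' := \sigma \oplus K_{\tilde\Pi}$ under the identification $\tp \mapsto p_1$, $\tq \mapsto p_2$. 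First I would check admissibility: each of $\Gpq,\Gqp,\Gpqp$ is semi-complete and transitive (stated in the excerpt), hence in $\IIS$; and $\Gnone \in \Phi(1)$ (also stated). Since $\sigma$ has no infinite silent suffix, the corresponding blocks of $\Gnone$'s in $\sigma'$ are finite and always followed by a graph from $\IIS$, so $\sigma'$ matches the regular expression $(\Phi(1)^* \cdot \IIS)^\omega$ defining $\LIIS(1)$. The one subtlety is whether $\tpqp$ needs to map to a graph in $\Phi(0)=\IIS$ rather than $\Phi(1)$ — it does map to $\Gpqp\in\IIS$, which is fine.

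The core of the argument is the view-equality claim $Views(\sigma) = Views_{\{p_1,p_2\}}(\sigma')$, which I would prove by induction on the round number $r$, showing $view_\sigma(p_1,r) = view_{\sigma'}(p_1,r)$ and symmetrically for $p_2$ (after relabeling $\tp\mapsto p_1$). The base case $r=0$ is immediate since views are just the singleton input sets. For the inductive step, the key structural observation is that in every graph $G_i \oplus K_{\tilde\Pi}$ the in-neighborhood of $p_1$ (resp. $p_2$) within $\Pi$ is \emph{exactly} its in-neighborhood within $\{p_1,p_2\}$ in $G_i$ — the $\oplus$ operation only adds edges \emph{from} $\{p_1,p_2\}$ \emph{to} $\tilde\Pi$ (and edges among $\tilde\Pi$), never into $p_1$ or $p_2$ from $\tilde\Pi$. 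Hence $In_{G_i\oplus K_{\tilde\Pi}}(p_1) = In_{G_i}(p_1) \subseteq \{p_1,p_2\}$, and the view of $p_1$ after round $r$ is assembled from exactly the same set of predecessor views as in $\sigma$, which are equal by the induction hypothesis; the same holds for $p_2$. This gives $view_{\sigma'}(p_1,r) = view_\sigma(p_1,r)$, closing the induction. Since the decision map depends only on a process's view, any stabilizing-consensus protocol for $\LIIS(1)$, restricted to $p_1,p_2$, induces a protocol producing identical decisions on $\sigma$; so if it solved stabilizing consensus on all of $\LIIS(1)$ it would in particular solve it on the two-process sub-system along $\sigma$, contradicting \cref{thm:stabagreeimp}.

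I expect the main obstacle to be bookkeeping rather than conceptual: one must be careful that the ``$\oplus$'' construction indeed never routes any edge into $p_1$ or $p_2$ from $\tilde\Pi$, so that the restricted views are genuinely insulated from whatever the other $n-2$ processes are doing, and one must confirm that the resulting sequence respects the $\LIIS(1)$ omission budget ($f=1$) round by round — in particular that $\Gnone$ has exactly one missing edge relative to some semi-complete transitive graph and that no other substituted graph introduces omissions. A secondary point to state cleanly is the reduction itself: the contradiction in the next theorem will need not just view equality but the consequence that $p_1$ and $p_2$ must satisfy stable agreement and validity \emph{with respect to their own inputs}, which follows because validity for the $n$-process task forces each of $p_1,p_2$ to eventually output a value in the common input pool, and for suitably chosen input configurations (e.g. $p_1,p_2$ disagreeing, all others irrelevant) this replays the \DLL obstruction verbatim.
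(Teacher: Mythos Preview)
Your proposal is correct and follows essentially the same approach as the paper: both take $\sigma' := \sigma \oplus K_{\tilde\Pi}$, verify admissibility in $\LIIS(1)$ via the absence of an infinite silent suffix, and argue view equality from the fact that $p_1,p_2$ receive no edges from $\tilde\Pi$. Your version is simply more explicit, spelling out the induction on rounds and the in-neighborhood observation that the paper compresses into ``by construction''; the final paragraph about validity and the reduction to the next theorem is extra but harmless here.
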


\begin{proof}
    Let $\sigma \in \DLL$ and consider for $\sigma' = \sigma \oplus K_{\Tilde{\Pi}}$. By construction we know that\\
    $\sigma'(r) \in \{\Gpq, \Gpqp,\Gqp,\Gnone\}$ and as $\sigma$ does not contain an infinite silent suffix (i.e., $\sigma|_k^\infty \neq (p_1\none p_2)^\omega$ for all $k > 0$), $\sigma'$ also has no infinite suffix $(\Gnone)^\omega$ where no communication between $p_1$ and $p_2$ happens. Thus, $\sigma'$ is an admissible $LIIS(1)$ sequence.

    By construction, in $\sigma'$, $p_1$ and $p_2$ only receive messages from each other, moreover, they only do so in rounds where they also receive a message in $\sigma$. Therefore their views are identical $Views(\sigma) = Views_{\{p_1, p_2\}}(\sigma')$.
\end{proof}

\begin{theorem}[Stabilizing consensus Impossibility in $\LIIS(1)$]\label{thm:stabimpdiis}
    Let $\Prot$ be an arbitrary stabilizing protocol in the $\LIIS(1)$ model. $\Prot$ does not solve stabilizing consensus.
\end{theorem}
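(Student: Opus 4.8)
The plan is to reduce to the impossibility already proved for the \DLL in \cref{thm:stabagreeimp}, using the view identification of \cref{lem:dllembedliis}. Suppose, for contradiction, that some stabilizing protocol $\Prot$ solves stabilizing consensus in $\LIIS(1)$; restricting inputs if necessary, we may assume it solves the binary consensus task with input value set $\{0,1\}$ (a multi-valued stabilizing consensus protocol restricted to inputs in $\{0,1\}$ is still a binary one, since validity forces a decision among the present inputs). From $\Prot$ I will build a two-process stabilizing protocol $\Prot_2$ solving binary stabilizing consensus in the \DLL, contradicting \cref{thm:stabagreeimp}. An equally valid but wordier route is to replay the conflicted-prefix induction of \cref{thm:stabagreeimp} verbatim with the admissible graphs $\Gpq,\Gpqp,\Gqp,\Gnone$ in place of $\tpq,\tpqp,\tqp,\tnone$, which yield the same chain of indistinguishabilities among $p_1$ and $p_2$; I prefer the reduction.

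\emph{Building $\Prot_2$.} Fix a padding rule extending every two-process input configuration $(\conin_1,\conin_2)\in\{0,1\}^2$ to an $n$-process configuration by assigning each process of $\Tilde{\Pi}$ the value $\conin_1$, so that the only input values present are $\conin_1$ and $\conin_2$. Given $\sigma\in\DLL$, set $\sigma':=\sigma\oplus K_{\Tilde{\Pi}}$. By \cref{lem:dllembedliis} we have $\sigma'\in\LIIS(1)$ (admissibility uses that $\sigma$ has no infinite silent suffix), and the views that $p_1$ and $p_2$ generate under $\sigma'$ coincide with those they generate under $\sigma$ in the \DLL. Hence letting $\Prot_2$ apply the decision map of $\Prot$ to these common views yields a well-defined decision map on all \DLL-views, with the decision of $p_i$ under $\Prot_2$ on $\sigma|_1^r$ equal to the decision of $p_i$ under $\Prot$ on $\sigma'|_1^r$, for $i\in\{1,2\}$ and all $r$.

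\emph{Correctness of $\Prot_2$ and the contradiction.} Since $\Prot$ is stabilizing, $p_1$ and $p_2$ stabilize on $\sigma'$, hence they stabilize under $\Prot_2$ on $\sigma$ (on the same respective values), giving \emph{stabilization}; \emph{stable agreement} is inherited from agreement of $\Prot$ applied to $p_1,p_2$ on $\sigma'$. For \emph{validity}: the stable value of $p_i$ under $\Prot$ on $\sigma'$ is the input of some process and therefore, by the padding rule, lies in $\{\conin_1,\conin_2\}$, a legal output for the two-process binary consensus task on input $(\conin_1,\conin_2)$; the configurations $(0,0)$ and $(1,1)$ are handled identically, validity then pinning the common value. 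Thus $\Prot_2$ solves binary stabilizing consensus in the \DLL. By \cref{thm:stabagreeimp} there is some $\sigma\in\DLL$ on which $\Prot_2$ fails, i.e.\ $p_1$ or $p_2$ never stabilizes, or they stabilize on distinct values. Transporting this back to $\sigma'=\sigma\oplus K_{\Tilde{\Pi}}\in\LIIS(1)$ through the view identification, $\Prot$ exhibits the same failure on $\sigma'$, contradicting the choice of $\Prot$. Finally, since $\IIS = \LIIS(0) \subseteq \LIIS(1) \subseteq \cdots \subseteq \LIIS(f)$, impossibility in $\LIIS(1)$ also gives impossibility in $\LIIS(f)$ for every $f\geq 1$.

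I expect the only genuinely delicate point to be the \emph{validity} bookkeeping: an $n$-process consensus protocol may legitimately decide on a value initially held only by a process of $\Tilde{\Pi}$, so without the padding restriction the induced two-process protocol need not satisfy two-process validity. Restricting to binary consensus and padding $\Tilde{\Pi}$ only with values already present among $\{\conin_1,\conin_2\}$ removes this obstacle, and everything else is a routine transfer along \cref{lem:dllembedliis}.
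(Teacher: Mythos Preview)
Your proof is correct and takes the same approach as the paper: reduce to \cref{thm:stabagreeimp} via the view identification of \cref{lem:dllembedliis}, so that a putative $\LIIS(1)$ protocol would induce a \DLL stabilizing consensus protocol. You are in fact more careful than the paper's three-line argument about the validity transfer, explicitly padding $\Tilde{\Pi}$ with values already present among $\{\conin_1,\conin_2\}$ so that two-process validity is inherited; the paper elides this bookkeeping.
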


\begin{proof}
    Assume for a contradiction that $\Prot$ solves stabilizing consensus in the $\LIIS(1)$ model. Take any $\sigma \in \DLL$, from \cref{lem:dllembedliis}, it follows that there exists a $\sigma'\in\LIIS(1)$ with identical views for $p_1$, $p_2$. The decision map $\delta_{\Prot}$ therefore solves stabilizing consensus on $\sigma\in\DLL$ directly contradicting \cref{thm:stabagreeimp}.

\end{proof}

\section{Conclusion}\label{sec:conclusion}
    In this paper we provided the first stabilizing consensus impossibility result where the existence a non-empty kernel is guaranteed (i) but the number or rounds where everybody hears from a member in the kernel (ii) is not . While it has been shown that a non-empty kernel is necessary for solving stabilizing consensus in \cite{CM19:DC}, we prove that it is not sufficient.

    At the core of the stabilizing consensus impossibility lies the fact that padding the communication with arbitrarily long but finite silence periods greatly impairs the decision power of stabilizing protocols, essentially forcing them to fix a value during silence periods. This limitation enables us to find a valid prefix that has conflicting decision values, and thus prevents stabilization. This result highlights the importance of communication liveness within a system, since any bounded variants of the \DLL model are capable of solving stabilizing consensus.

    Furthermore, we extended this impossibility result to the Iterated Immediate Snapshot model, where the possibility of a single read-faulty snapshot makes stabilizing consensus impossible. This result sheds new light on the impact of omission faults on consensus, even when the termination condition is relaxed to stabilization, and the communication assumptions are as strong as a shared-memory model.

\bibliography{lit}

\appendix


\end{document}